\newtheorem{theorem}{Theorem}
\newtheorem{lemma}{Lemma}
\newtheorem{proposition}{Proposition}
\theoremstyle{definition}
\newtheorem{assumption}{Assumption}
\theoremstyle{remark}
\newtheorem{remark}{Remark}
\newacro{lqr}[LQR]{Linear Quadratic Regulator}
\newacro{slqr}[SLQR]{Structured Linear Quadratic Regulators}
\newacro{olqr}[OLQR]{Output-feedback Linear Quadratic Regulators}
\newacro{lqg}[LQG]{Linear Quadratic Gaussian}
\newacro{dare}[DARE]{Discrete-time Algebraic Riccati Equation}
\newacro{ouralgo}[RNPO]{Riemannian Newton-type Policy Optimization}
\newacro{po}[PO]{Policy Optimization}
\newacro{pg}[PG]{Projected Gradient}
\newacro{gf}[GF]{Gradient Flow}
\newacro{gd}[GD]{Gradient Descent}
\newacro{sgd}[SGD]{Stochastic Gradient Descent}
\newacro{sde}[SDE]{Stochastic Differential Equation}
\newacro{PBH}[PBH]{Popov-Belevitch-Hautus}
\def\BibTeX{{\rm B\kern-.05em{\sc i\kern-.025em b}\kern-.08em
		T\kern-.1667em\lower.7ex\hbox{E}\kern-.125emX}}
\def\bR{{\mathbb{R}}}
\newcommand*{\pdfset}[2]{%
  \leavevmode
  \pdfsave
  \pdfliteral{#1}%
  \rlap{#2}%
  \pdfrestore
  \phantom{#2}%
}
\newcommand{\tr}[1]{\ensuremath{\mathrm{tr}\left[ #1 \right]}}
\newcommand{\E}[2]{\ensuremath{\mathbb{E}_{#1}\left[ #2 \right]}}
\DeclareMathOperator{\lambdamax}{\text{$\overline{\lambda}$}}
\DeclareMathOperator{\lambdamin}{\text{$\underline{\lambda}$}}
\newcommand{\algorithmfootnote}[2][\footnotesize]{%
  \let\old@algocf@finish\@algocf@finish
  \def\@algocf@finish{\old@algocf@finish
    \leavevmode\rlap{\begin{minipage}{\linewidth}
    #1#2
    \end{minipage}}%
  }%
}
\title{Duality-Based Stochastic Policy Optimization for Estimation with Unknown Noise Covariances}
\author{Shahriar Talebi, Amirhossein Taghvaei, Mehran Mesbahi 
\thanks{The authors are with the William E. Boeing Department of Aeronautics and Astronautics, University of Washington, Seattle, WA, USA. S. Talebi is also with the Department of Mathematics at the University of Washington. The research of the first and last authors has been supported by AFOSR grant FA9550-20-1-0053 and NSF grant ECCS-2149470. Emails: {\em\small shahriar@uw.edu}, {\em\small amirtag@uw.edu}, and {\em\small mesbahi@uw.edu}.}
}
\begin{document}

\maketitle
\begin{abstract}
Duality of control and estimation allows mapping recent advances in data-guided control to the estimation setup. This paper formalizes and utilizes such a mapping to consider learning the optimal (steady-state) Kalman gain when process and measurement noise statistics are unknown. Specifically, building on the duality between synthesizing optimal control and estimation gains, the filter design problem is formalized as direct policy learning. In this direction, the duality is used to extend existing theoretical guarantees of direct policy updates for \ac{lqr} to establish global convergence of the \ac{gd} algorithm for the estimation problem--while addressing subtle differences between the two synthesis problems. Subsequently, a \ac{sgd} approach is adopted to learn the optimal Kalman gain without the knowledge of noise covariances. The results are illustrated via several numerical examples.
\end{abstract}

\section{Introduction}
Duality of control and estimation provides an important relationship between two distinct synthesis problems in system theory~\cite{kalman1960general,kalman1960new,pearson1966duality}. In fact, duality has served as an effective bridge for developing theoretical and computational techniques in  one domain and then ``dualized" for use in the other. For instance, the stability proof of  the Kalman filter relies on the stabilizing feature of the optimal feedback gain for the dual \ac{lqr} optimal control problem~\cite[Ch. 9]{xiong2008introduction}.  The aim of this paper is to build on this dualization for the purpose of learning the optimal estimation policy via recent advances in data-driven algorithms for optimal control.

The setup that we consider is the estimation problem for a system with known linear dynamics and observation model, but unknown process and measurement noise covariances. The problem is to learn the optimal steady-state Kalman gain using a training data that consists of independent realizations of the observation signal. 
This problem has a long history in system theory, often examined in the context of adaptive Kalman filtering~\cite{mehra1970identification,mehra1972approaches,carew1973identification,belanger1974estimation,myers1976adaptive,tajima1978estimation}. 
The classical reference \cite{mehra1972approaches} includes a comprehensive summary of four solution approaches to this problem: Bayesian inference \cite{magill1965optimal,hilborn1969optimal,matisko2010noise}, Maximum likelihood~\cite{kashyap1970maximum,shumway1982approach}, covariance matching~\cite{myers1976adaptive}, and innovation correlation methods~\cite{mehra1970identification,carew1973identification}. 
The Bayesian and maximum likelihood setup are known to be computationally costly and covariance matching admits undesirable biases in practice. 
For these reasons, the innovation correlation based approaches are more popular and have been subject of more recent research~\cite{odelson2006new,aakesson2008generalized,dunik2009methods}.  
The  article~\cite{zhang2020identification} includes an excellent survey on this topic. 
Though relying strongly on the statistical assumptions on the model, these approaches do not provide {\em non-asymptotic} guarantees. 

On the optimal control side, there has been a number of recent advances in data-driven synthesis methods.
For example, first order methods have been adopted for state-feedback \ac{lqr} problems \cite{bu2019lqr, bu2020policy}. This direct policy optimization perspective has been
particularly effective as it has been shown that the \ac{lqr} cost is \emph{gradient dominant} \cite{fazel2018global}, allowing
the adoption and global convergence of first order methods for optimal feedback synthesis despite the non-convexity of the cost, when represented directly in terms of this policy.
Since then, \ac{po} using first order methods has been investigated for variants of \ac{lqr} problem, such as \ac{olqr} \cite{fatkhullin2020optimizing}, model-free setup \cite{mohammadi2021linear}, risk-constrained setup \cite{zhao2021global}, \ac{lqg} \cite{Tang2021analysis}, and recently, Riemannian constrained \ac{lqr} \cite{talebi2022policy}.

This paper aims to bring new insights to the classical estimation problem through the lens of control-estimation duality and utilizing recent advances in data-driven optimal control. In particular, we first argue that the optimal mean-squared error estimation problem is ``equivalent'' to an \ac{lqr} problem. This in turn, allows representing the problem of finding the optimal Kalman gain as that of optimal policy synthesis for the \ac{lqr} problem---under conditions distinct from what has been examined in the literature. In particular in this equivalent \ac{lqr} formulation, the cost parameters--relating to the noise covariances--are unknown and the covariance of initial state is not positive definite. 
By addressing these technical issues, we show how exploring this relationship leads to computational algorithms for learning optimal Kalman gain with non-asymptotic error guarantees.  

The rest of the paper is organized as follows. The estimation problem is formulated in \S\ref{sec:problem}, followed by the estimation-control duality relationship in \S\ref{sec:duality}. The theoretical analysis on policy optimization for the Kalman gain appears in \S\ref{sec:theory} while the proofs are deferred to \cite{talebi2022duality}. We propose an \ac{sgd} algorithm in \S\ref{sec:algorithms} with several numerical examples, followed by concluding remarks in \S\ref{sec:conclusion}.

 \section{Background and Problem Formulation}\label{sec:problem}
Consider the stochastic difference equation,
\begin{subequations} \label{eqn:sysdyn}
\begin{align}
    x(t+1) =& A x(t) + \xi(t),\\
    y(t) =& H x(t) + \omega(t),
\end{align}
\end{subequations}
where $x(t) \in \mathbb R^n$ is the state of the system, $y(t)\in \mathbb R^m$ is the observation, and $\{\xi(t)\}_{t\in \mathbb Z}$ and $\{\omega(t)\}_{t\in \mathbb Z}$ are the uncorrelated zero-mean process and measurement noise vectors, respectively, with the following covariances,
\[\E{}{\xi(t)\xi^\intercal(t)} = Q \in \bR^{n\times n}, \quad \E{}{\omega(t)\omega^\intercal(t)} = R \in \bR^{m\times m},\]
for some (possibly time-varying) positive (semi-)definite matrices $Q, R \succcurlyeq 0$.  Let $m_{0}$ and $P_0 \succcurlyeq 0$ denote the mean and covariance of the initial condition $x_0$. 

Now, let us fix a time horizon $T>0$ and define an estimation policy, denoted by $\mathcal P$, as a map that takes a history of the observation signal $\mathcal Y_T=\{y(0),y(1),\ldots,y(T-1)\}$ as an input and outputs an estimate of the state $x(T)$, denoted by $\hat{x}_{\mathcal P}(T)$. 
The filtering problem of interest is finding the estimation policy $\mathcal P$ that minimizes the mean-squared error, 
\begin{align}\label{eq:mse-x-P}
   \E{}{\|x(T)-\hat{x}_{\mathcal P}(T)\|^2}.
\end{align}
We make the following assumptions in our problem setup:
\begin{enumerate*}
    \item The matrices $A$ and $H$ are known, but the process and the measurement noise covariance matrices, $Q$ and $R$, are \emph{not} available.
    \item We have access to a training data-set that consists of independent realizations of the observation signal $\{y(t)\}_{t=0}^T$. However, ground-truth measurements of $x(T)$ is \emph{not} available.\footnote{
    This setting arises in various applications, such as aircraft wing dynamics, when approximate  or reduced-order models are employed, and the effect of unmodelled dynamics and disturbances are captured by the process noise.}    
\end{enumerate*}

It is not possible to directly minimize~\eqref{eq:mse-x-P} as the ground-truth measurement $x(T)$ is not available. Instead, we propose to minimize the mean-squared error in predicting the observation $y(T)$ as a  surrogate objective function. 
In particular, let us first define  $\hat{y}_{\mathcal P}(T)=H\hat{x}_{\mathcal P}(T)$ as the prediction for the observation $y(T)$. This is indeed a prediction since the estimate $\hat{x}_{\mathcal P}(T)$ depends only on the observations up to time $T-1$.  The optimization problem is now finding the estimation policy $\mathcal P$ that minimizes the mean-squared prediction error,
\begin{equation}\label{eq:min-pred-error-P}
    \mathcal J^{\text{est}}_T(\mathcal P) \coloneqq \E{}{\|y(T)-\hat{y}_{\mathcal P}(T) \|^2} .
\end{equation}
\subsubsection{Kalman filter}
Indeed, when $Q$ and $R$ are known, the solution is given by the celebrated Kalman filter algorithm~\cite{kalman1960new}. 
The algorithm involves an iterative procedure to update the estimate $\hat{x}(t)$ according to  
\begin{equation}\label{eq:KF-mean}
    \hat{x}(t+1) = A\hat{x}(t) + L(t)(y(t) - H \hat{x}(t)),~ \hat{x}(0) = m_{0},
\end{equation}
where $L(t):=AP(t)H^\intercal(HP(t)H^\intercal + R)^{-1}$ is the Kalman gain, and $P(t):=\mathbb E[(x(t) - \hat{x}(t))(x(t) - \hat{x}(t))^\intercal]$ is the error covariance matrix that satisfies the Ricatti equation,
\begin{align*}
    P(t+1) = (A-L(t)H)P(t)A^\intercal + Q,\quad P(t_0) = P_0. 
\end{align*}
Note that the update law presented here combines the information and dynamic update steps of the Kalman filter. 

It is known that $P(t)$ converges to an steady-state value $P_\infty$ when the pair $(A,H)$ is observable and the pair $(A,Q^{\frac{1}{2}})$ is controllable~\cite{kwakernaak1969linear,lewis1986optimal}. In such a case, the gain converges to $L_\infty :=AP_\infty H^\intercal(HP_\infty H^\intercal + R)^{-1}$, the so-called steady-state Kalman gain. It is a common practice to evaluate the steady-state Kalman gain $L_\infty$ offline and use it, instead of $L(t)$, to update the estimate in real-time.   
\subsubsection{Learning the optimal Kalman gain}
Inspired by the structure of the Kalman filter, we consider restriction of the estimation policies $\mathcal P$ to those realized with a constant gain. In particular, we define the estimate $\hat{x}_L(T)$ as one given by the Kalman filter at time $T$ realized by the constant gain $L$. Rolling out the update law~\eqref{eq:KF-mean} for $t=0$ to $t=T-1$, and replacing $L(t)$ with $L$, leads to the following expression for the estimate $\hat{x}_L(T)$ as a function of $L$, 
\begin{align}\label{eq:estimate-x-L}\textstyle
    \hat{x}_L(T) = A_L^Tm_0 + \sum_{t=0}^{T-1}A_L^{T-t-1} L y(t),
\end{align}
where $A_L\coloneqq A-LH$.
Note that this estimate does not require knowledge of the matrices $Q$ or $R$. By considering $\hat{y}_L(T):=H\hat{x}_L(T)$, the problem is now finding the optimal gain $L$ that minimizes the mean-squared prediction error
\begin{equation}\label{eq:mse-x-L}
  J^{\text{est}}_T(L):=\E{}{\|y(T)-\hat{y}_{L}(T)\|^2}.
\end{equation}

Numerically, this problem falls into the realm of stochastic optimization and can be solved by algorithms such as \acf{sgd}. Such an algorithm would require accessing independent realizations of the observation signal. An algorithm that utilizes such realizations is presented in \S\ref{sec:algorithms}.
Theoretically, however, it is not yet clear if this optimization problem is well-posed  and admits a unique minimizer. This is the subject of \S\ref{sec:theory}, where certain properties of the objective function, such as its gradient dominance and smoothness, are established. These theoretical results are then used to analyze first-order optimization algorithms and provide stability guarantees of the estimation policy iterates. The results are based on the duality relationship between estimation and control that is presented next.

\section{Estimation-Control Duality Relationship}\label{sec:duality}
We use the duality framework, as described in~\cite[Ch.7.5]{aastrom2012introduction}, to relate the problem of learning the optimal estimation policy to that of learning the optimal control policy for an \ac{lqr} problem. In order to do so, we introduce the adjoint  system:
\begin{equation}\label{eqn:adjdyn}
    z(t) = A^\intercal z(t+1) - H^\intercal u(t+1),
\end{equation}
where $z(t) \in \mathbb R^n$ is the adjoint state and $\mathcal U_T:=\{u(1),\ldots,u(T)\} \in \mathbb R^{mT}$ are the control variables (dual to the observation signal $\mathcal Y_T$). The adjoint state is initialized at $z(T)=a \in \mathbb R^n$ and simulated  \textit{backward in time} starting with $t= T-1$.
We now formalize a relationship between estimation policies for the  system~\eqref{eqn:sysdyn} and control policies for the adjoint system~\eqref{eqn:adjdyn}. 
Consider estimation policies  that are linear functions of the observation history $\mathcal Y_T\in \mathbb R^{mT}$ and the initial mean vector $m_0 \in \mathbb R^n$. We characterize such policies with a linear map $\mathcal L:\mathbb R^{mT+n} \to \mathbb R^n$ and let the estimate  $\hat{x}_{\mathcal L}(T) := \mathcal L(m_0,\mathcal Y_T)$. The adjoint of this linear map, denoted by $\mathcal L^\dagger: \mathbb R^n \to \mathbb R^{mT+n}$, is used to define a control policy for the adjoint system~\eqref{eqn:adjdyn}. In particular, the adjoint map takes $a\in \mathbb R^n$ as input and outputs  $\mathcal L^\dagger(a)=\{b,u(1),\ldots,u(T)\} \in \mathbb R^{mT+n}$. 
This relationship can be depicted as,
\begin{align*}
    \{m_0,y(0),\ldots,y(T-1)\} \quad &\overset{\mathcal L}{\longrightarrow} \quad 
    \hat{x}_{\mathcal L}(T)\\
     \{b,u(1),\ldots,u(T)\} \quad &\overset{\mathcal L^\dagger}  {\longleftarrow} \quad a
\end{align*}
Note that
\(\langle a,\mathcal L(m_0,\mathcal Y_T)\rangle_{\mathbb R^n} = \langle \mathcal L^\dagger(a),(m_0,\mathcal Y_T)\rangle_{\mathbb R^{mT+n}},\)
so
\begin{align}\label{eq:adjoint}\textstyle
         b^\intercal m_0 + \sum_{t=0}^{T-1}u(t+1)^\intercal y(t) &= a^\intercal \hat{x}_{\mathcal L}(T).
\end{align}

The following proposition relates the mean-squared error for a linear estimation policy, to the following \ac{lqr} cost:
\begin{multline}
     \mathcal J_T^{\text{LQR}}(a,\{b,\mathcal U_T\})  := [z^\intercal(0) m_0 - b^\intercal m_0]^2
    \\ \textstyle +z^\intercal(0) P_0 z(0) + \sum_{t=1}^{T} \left[ z^\intercal(t) Q z(t) + u^\intercal(t) R u(t) \right].\label{eq:lqr}
\end{multline}

\begin{proposition}\label{prop:duality}
Consider the estimation problem for the system~\eqref{eqn:sysdyn} and the \ac{lqr} problem~\eqref{eq:lqr} subject to the adjoint dynamics~\eqref{eqn:adjdyn}. For each estimation policy $\hat{x}_{\mathcal L}(T)= \mathcal L(m_0,\mathcal Y_T)$, with a linear map $\mathcal L$, and for any $a \in \mathbb R^n$ we have the identity 
\begin{equation*}
    \E{}{|a^\intercal x(T)-a^\intercal \hat x_{\mathcal L}(T)|^2} =  \mathcal J_T^{\text{LQR}}(a,\mathcal L^\dagger(a)).
\end{equation*}
Furthermore, the prediction error as in \cref{eq:mse-x-L} satisfies
\begin{equation*}
    J^\text{est}_T(L) = \sum_{i=1}^m \mathcal J_T^{\text{LQR}}(H_i,\mathcal L^\dagger(H_i)) + \tr{R},
\end{equation*}
where $\hat{y}_{\mathcal L}(T) := H \hat{x}_{\mathcal L}(T)$ and $H_i^\intercal \in \mathbb R^n$ is the $i$-th row of the $m\times n$ matrix $H$ for $i=1,\ldots,m$.
\end{proposition}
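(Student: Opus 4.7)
The plan is to prove the first identity via a telescoping (Abel-summation) argument that pairs the forward state dynamics with the backward adjoint dynamics, then take expectations exploiting uncorrelatedness of the noise terms; the second identity then follows by row-wise decomposition of $\|H(x(T)-\hat x_{\mathcal L}(T))\|^2$.

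Specifically, fix $a\in\mathbb R^n$ and let $\{z(t)\}_{t=0}^{T}$, $\mathcal U_T=\{u(1),\dots,u(T)\}$ be the adjoint trajectory with $z(T)=a$ driven by the deterministic controls from $\mathcal L^\dagger(a)=\{b,u(1),\ldots,u(T)\}$. The key computation is to evaluate $\Delta(t):=z^\intercal(t+1)x(t+1)-z^\intercal(t)x(t)$ in two ways: substituting $x(t+1)=Ax(t)+\xi(t)$ and using the adjoint relation $z^\intercal(t+1)A-z^\intercal(t)=u^\intercal(t+1)H$ yields
\begin{equation*}
\Delta(t)=u^\intercal(t+1)Hx(t)+z^\intercal(t+1)\xi(t)=u^\intercal(t+1)y(t)-u^\intercal(t+1)\omega(t)+z^\intercal(t+1)\xi(t).
\end{equation*}
Telescoping from $t=0$ to $T-1$ gives an expression for $a^\intercal x(T)$ in terms of $z^\intercal(0)x_0$, the observation sum $\sum_{t=0}^{T-1}u^\intercal(t+1)y(t)$, and the noise sums. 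Using the adjoint identity \eqref{eq:adjoint}, which for this same $\mathcal L^\dagger(a)$ reads $a^\intercal\hat x_{\mathcal L}(T)=b^\intercal m_0+\sum_{t=0}^{T-1}u^\intercal(t+1)y(t)$, the observation sum cancels upon subtraction, leaving
\begin{equation*}
a^\intercal x(T)-a^\intercal\hat x_{\mathcal L}(T)=\bigl(z^\intercal(0)m_0-b^\intercal m_0\bigr)+z^\intercal(0)(x_0-m_0)-\sum_{t=0}^{T-1}u^\intercal(t+1)\omega(t)+\sum_{t=0}^{T-1}z^\intercal(t+1)\xi(t).
\end{equation*}
The first term is deterministic, while the remaining three are zero-mean and mutually uncorrelated by the standing assumptions on $x_0$, $\xi$, $\omega$; squaring and taking expectation gives exactly the four summands of $\mathcal J_T^{\text{LQR}}(a,\mathcal L^\dagger(a))$ in \eqref{eq:lqr}, after reindexing $t\mapsto t+1$ in the noise sums.

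For the second identity, I would write $y(T)-\hat y_L(T)=H(x(T)-\hat x_L(T))+\omega(T)$ where $\omega(T)$ is independent of everything up to time $T$, so its cross term vanishes and it contributes $\tr{R}$. Then expand $\|H(x(T)-\hat x_L(T))\|^2=\sum_{i=1}^m|H_i^\intercal x(T)-H_i^\intercal\hat x_L(T)|^2$, take expectation inside the sum, and invoke the first identity with $a=H_i$ for each $i$, noting that the gain $L$ induces a specific linear estimation policy $\mathcal L$ via the roll-out \eqref{eq:estimate-x-L} so that $\mathcal L^\dagger(H_i)$ is well-defined.

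The main obstacle is bookkeeping: verifying that $\{z(t),u(t+1)\}$ are deterministic functions of $a$ (so that the cross terms between the three stochastic components indeed vanish) and that the telescoping indices line up with the cost \eqref{eq:lqr} which sums $t=1,\dots,T$ rather than $t=0,\dots,T-1$. A secondary subtlety is ensuring the adjoint identity \eqref{eq:adjoint} is applied with the correct input $(m_0,\mathcal Y_T)$ and that $b=\mathcal L^\dagger(a)|_{\text{first block}}$ is consistent with the $z^\intercal(0)m_0-b^\intercal m_0$ term—conceptually, $b$ plays the role of a prior-bias dual variable that need not equal $z(0)$, which is precisely why the deterministic bias square appears as a distinct penalty in the LQR cost rather than being absorbed.
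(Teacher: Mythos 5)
Your proposal is correct and follows essentially the same route as the paper's proof: the telescoping pairing of the forward dynamics with the backward adjoint recursion, cancellation of the observation sum via the adjoint identity \eqref{eq:adjoint}, and then squaring and taking expectations using uncorrelatedness of $x_0$, $\xi$, and $\omega$ to recover the four terms of \eqref{eq:lqr}; the second identity is likewise handled identically by splitting off $\omega(T)$ and applying the first identity row-wise with $a=H_i$. Your explicit decomposition $z^\intercal(0)x_0-b^\intercal m_0=(z^\intercal(0)m_0-b^\intercal m_0)+z^\intercal(0)(x_0-m_0)$ and the remark on the determinism of $\{z(t),u(t)\}$ just make precise steps the paper leaves implicit.
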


\begin{remark}
The duality is also true in the continuous-time setting where the estimation problem is related to a continuous-time \ac{lqr} problem.    Recent extensions to the nonlinear setting appears in~\cite{kim2019lagrangian} with a comprehensive study in~\cite{kim2022duality}.
This duality is different than the maximum likelihood approach which involves an optimal control problem over the original dynamics instead of the adjoint system.
\end{remark}

\subsubsection{Duality in the constant control gain regime}
In this section, we use the aforementioned duality relationship to show that the estimation policy with constant gain is dual to the  control policy with constant feedback gain. This result is then used to obtain an explicit formula for the objective function~\eqref{eq:mse-x-L}.

Consider the adjoint system~\eqref{eqn:adjdyn} with the linear feedback law $u(t)=L^\intercal z(t)$.
Then, 
\begin{equation}\label{eq:z-sol}
z(t)=(A_L^\intercal)^{T-t}a,\quad \text{for}\quad t=0,1,\ldots,T.
\end{equation}
Therefore, as a function of $a$, $u(t) = L^\intercal(A_L^\intercal)^{T-t}a$. Moreover, for this choice of control, the optimal $b=z(0)= (A_L^\intercal)^{T}a$. These relationships are used to identify the control policy 
\(
  \mathcal L^\dagger (a) 
  =((A_L^\intercal)^{T}a, L^\intercal(A_L^\intercal)^{T-1}a,\ldots,L^\intercal  a).
\)
This control policy corresponds to an estimation policy by the adjoint relationship~\eqref{eq:adjoint}:
\begin{equation*}\textstyle
    a^\intercal \hat{x}_{\mathcal L}(T) = a^\intercal A_L^Tm_0 + \sum_{t=0}^{T-1} a^\intercal A_L^{T-t-1} L y(t),\quad \forall a \in \mathbb R^n.
\end{equation*}
As this relationship holds for all $a\in \mathbb R^n$, we have,
\begin{align*}\textstyle
  \hat{x}_{\mathcal L}(T) = A_L^Tm_0 + \sum_{t=0}^{T-1} A_L^{T-t-1} L y(t),
\end{align*}
that coincides with the Kalman filter estimate with constant gain $L$ given by the formula~\eqref{eq:estimate-x-L}. Therefore, the adjoint relationship~\eqref{eq:adjoint} relates the control policy with constant gain $L^\intercal$ to the Kalman filter with the constant gain $L$. 

Next, we use this relationship to evaluate the mean-squared prediction error~\eqref{eq:mse-x-L}.  
Denote by $J_T^{\text{LQR}}(a,L^\intercal)$ as the \ac{lqr} cost~\eqref{eq:lqr} associated with the control policy with constant gain $L^\intercal$ and  $b=z(0)$. Then, from the explicit formula for $z(t)$ and $u(t)$ above, we have, 
\begin{align*}
    J_T^{\text{LQR}}(a,L^\intercal) = a^\intercal X_T(L)a,
\end{align*}
where 
\begin{gather*}
    X_T(L) \coloneqq A_L^T P_0(A_L^\intercal)^T   + \sum_{t=1}^T  A_L^{T-t} (Q+LRL^\intercal)(A_L^\intercal)^{T-t}.
\end{gather*}
 Therefore, by the second claim in \Cref{prop:duality}, the mean-squared prediction error \cref{eq:mse-x-L} becomes,
 \begin{multline*}
    J^\text{est}_T(L)- \tr{R} = \sum_{i=1}^m J_T^\text{LQR}(H_i,L^\intercal)
    = \tr{X_T(L)H^\intercal H},
 \end{multline*}
where we have used the cyclic permutation property of the trace and the identity $H^{\intercal} H = \sum_{i=1}^m  H_i H_i^\intercal$.

\subsubsection{Duality in steady-state regime}
Define the set of Schur stabilizing gains
\[\mathcal{S} \coloneqq \{L \in \bR^{n\times m}: \rho(A-LH) < 1\}.\]
For any $L \in \mathcal S$, in the steady-state limit as $T \to \infty$: 
\(X_T(L) \to X_{\infty}(L) \coloneqq \sum_{t=0}^{\infty} \left(A_L\right)^{t} \left(  Q + L R L^\intercal \right)\left(A_L^\intercal\right)^{t}.\)
The limit coincides with the unique solution $X$ of the discrete Lyapunov equation
\(X = A_L X A_L^\intercal + Q + L R L^\intercal,\)
which exists as $\rho(A_L) < 1$.
Therefore, the steady-state limit of the mean-squared prediction error assumes the form,
\[J(L) \coloneqq \lim_{T \to \infty} J_{T}^\text{est}(L) = \tr{X_{\infty}(L) H^\intercal H} + \tr{R}.\]
Given the steady-state limit, we formally analyze the following constrained optimization problem:
\begin{align}\label{eq:opt-time-indepen}
    \min_{L \in \mathcal{S}} \; &\leftarrow J(L) = \tr{X_{(L)} H^\intercal H} + \tr{R},\\
              &\text{s.t.} \quad X_{(L)} = A_L X_{(L)} A_L^\intercal + Q + L R L^\intercal .\nonumber
\end{align}
\begin{remark}\label{remark:duality}
Note that the latter problem is technically the dual of the optimal \ac{lqr} problem as formulated in \cite{bu2019lqr} by relating  $A \leftrightarrow A^\intercal$, $-H \leftrightarrow B^\intercal$, $L \leftrightarrow K^\intercal$, and $H^\intercal H \leftrightarrow \Sigma$. However, one main difference here is that the matrices $Q$ and $R$ are unknown, and the $H^\intercal H$ may \emph{not} be positive definite, for example, due to rank deficiency in $H$ specially whenever $m < n$. Thus, in general, the cost function $J(L)$ is not necessarily coercive in $L$, which can drastcially effect the optimization landscape. For the same reason, in contrast to the \ac{lqr} case \cite{fazel2018global,bu2019lqr},
the gradient dominant property of $J(L)$ is not clear in the filtering setup. In the next section, we show that such issues can be avoided as long as the pair $(A,H)$ is observable. 
\end{remark}

\section{Theoretical analysis} \label{sec:theory}
In this section, we provide theoretical analysis of the proposed optimization problem~\eqref{eq:opt-time-indepen}. 
The following  lemma is useful for
our subsequent analysis which is a direct consequence of duality described in \Cref{remark:duality}, Lemmas 3.5 and 3.6 in \cite{bu2019lqr}, and the fact that the spectrum of a matrix remains unchanged under the transpose operation.
\begin{lemma}\label{lem:topology}
The set of Schur stabilizing gains $\mathcal{S}$ is regular open, contractible, and unbounded when $m\geq 2$ and the boundary $\partial \mathcal{S}$ coincides with the set $\{L \in \bR^{n\times m}: \rho(A-LH) = 1\}$. Furthermore, $J(.)$ is real analytic on $\mathcal{S}$ whenever $Q$ and $R$ are time-independent.
\end{lemma}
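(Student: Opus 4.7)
The plan is to transfer the topological statements from the dual LQR setting via the identification in Remark~\ref{remark:duality}, and to establish analyticity directly from the Lyapunov equation. Concretely, I would set $\widetilde A \coloneqq A^\intercal$, $\widetilde B \coloneqq H^\intercal$, and $\widetilde K \coloneqq L^\intercal$, so that $\widetilde A - \widetilde B \widetilde K = (A - LH)^\intercal$ and hence $\rho(A - LH) = \rho(\widetilde A - \widetilde B \widetilde K)$ by invariance of the spectrum under transposition. The linear isomorphism $L \mapsto L^\intercal$ between $\bR^{n \times m}$ and $\bR^{m \times n}$ is then a homeomorphism carrying $\mathcal{S}$ onto the LQR stabilizing set $\widetilde{\mathcal{S}} \coloneqq \{\widetilde K \in \bR^{m \times n} : \rho(\widetilde A - \widetilde B \widetilde K) < 1\}$.

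With this setup in place, I would invoke Lemmas~3.5 and~3.6 of~\cite{bu2019lqr}, which establish, for an arbitrary LQR pair, that the stabilizing set is regular open and contractible, that its boundary coincides with the locus of gains making the closed-loop spectral radius equal to one, and that it is unbounded as soon as the dual input dimension exceeds one. In our setting, that dual input dimension equals the number of rows of $H$, namely $m$, so the hypothesis $m \geq 2$ is exactly what is required. Transporting these four conclusions back along the continuous bijection $L \leftrightarrow L^\intercal$ yields every topological claim of the lemma, including the explicit description of $\partial \mathcal{S}$.

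For the analyticity claim, I would vectorize the Lyapunov equation $X_{(L)} = A_L X_{(L)} A_L^\intercal + Q + L R L^\intercal$ to obtain $\bigl(I - A_L \otimes A_L\bigr)\mathrm{vec}\bigl(X_{(L)}\bigr) = \mathrm{vec}\bigl(Q + L R L^\intercal\bigr)$. On $\mathcal{S}$ the Kronecker operator $I - A_L \otimes A_L$ is invertible, since its eigenvalues are $1 - \lambda\mu$ with $\lambda,\mu \in \sigma(A_L)$ and $\rho(A_L) < 1$. Because the right-hand side is polynomial in $L$ and matrix inversion is real analytic on its nonsingular locus, $X_{(L)}$ is a rational, and hence real-analytic, matrix-valued function of $L$ on $\mathcal{S}$, so $J(L) = \tr{X_{(L)} H^\intercal H} + \tr{R}$ inherits the same property whenever $Q$ and $R$ are time-independent.

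The step I expect to need the most care is the transfer itself: one must confirm that the arguments behind Lemmas~3.5 and~3.6 of~\cite{bu2019lqr} depend only on the algebraic form of the closed-loop spectral radius and on the dimensions of the gain space---not on any standing controllability or positive-definiteness assumption that might be implicit in the original LQR formulation---so that they apply verbatim to our dualized pair $(\widetilde A, \widetilde B) = (A^\intercal, H^\intercal)$. Once that symmetry is verified, the homeomorphism $L \mapsto L^\intercal$ carries every assertion across and no further work is required.
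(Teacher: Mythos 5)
Your proposal is correct and follows essentially the same route as the paper, which establishes this lemma solely by citing the duality of \Cref{remark:duality}, Lemmas~3.5 and~3.6 of~\cite{bu2019lqr}, and the invariance of the spectrum under transposition --- precisely your transfer along the linear homeomorphism $L \mapsto L^\intercal$ (the sign difference between your $\widetilde B = H^\intercal$ and the paper's $-H \leftrightarrow B^\intercal$ is only a closed-loop sign convention and is immaterial to the topological claims). Your explicit vectorization of the Lyapunov equation to get real analyticity of $L \mapsto X_{(L)}$, and hence of $J$, is a correct direct justification of a step the paper delegates to the cited reference.
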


\subsubsection{Coercive property}
Next, we provide sufficient conditions to recover the coercive property of $J(.)$ which resembles Lemma 3.7 in \cite{bu2019lqr}, but extended for the time-varying cost parameters $Q$ and $R$.

\begin{proposition}\label{prop:coercive}
Suppose the pair $(A,H)$ is observable, and $Q$ and $R$ are lower bounded uniformly in time with some positive definite matrices. Then, the function $J(.):\mathcal{S} \to \bR$ is coercive, i.e., for any sequence $\{L_k\} \in \mathcal{S}$,
\[\text{ if~~} L_k \to \partial\mathcal{S} \text{~~or~~} \|L_k\| \to \infty \text{~~then~~} J(L) \to \infty.\]
Furthermore, for any $\alpha >0$, the sublevel set $\mathcal{S}_\alpha \coloneqq \{L \in \bR^{n\times m}: J(L) \leq \alpha\}$ is compact and contained in $\mathcal{S}$ whenever $Q$ and $R$ are time-independent.
\end{proposition}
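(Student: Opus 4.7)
The plan is to leverage the cost representation $J(L) = \tr{X_\infty(L) H^\intercal H} + \tr{R}$ together with its trace-cyclic recasting $J(L) - \tr{R} = \tr{(Q + L R L^\intercal) Y(L)}$, where $Y(L) := \sum_{s=0}^\infty (A_L^\intercal)^s H^\intercal H A_L^s$ is the observability Gramian of $(A_L, H)$. A crucial preliminary observation is that output injection preserves observability, so $(A - LH, H)$ is observable for every $L$; in particular the truncated $n$-step Gramian $Y_n(L) := \sum_{s=0}^{n-1} (A_L^\intercal)^s H^\intercal H A_L^s = \mathcal{O}_L^\intercal \mathcal{O}_L$ is positive definite, where $\mathcal{O}_L$ denotes the observability matrix of $(A_L, H)$. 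Combining $Y(L) \succeq Y_n(L)$ with the uniform lower bounds $Q \succeq qI$ and $R \succeq rI$ ($q,r > 0$) yields the key two-term estimate
\[
J(L) - \tr{R} \geq q \tr{Y(L)} + r\, \|\mathcal{O}_L L\|_F^2.
\]

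For the boundary case $L_k \to L^* \in \partial \mathcal{S}$, choose an eigenvalue $\lambda^*$ of $A_{L^*}$ with $|\lambda^*| = 1$ together with a unit eigenvector $v^*$; the \ac{PBH} test applied to the observable pair $(A_{L^*}, H)$ guarantees $H v^* \neq 0$. Continuity of the spectrum (via a subsequential argument on the unit sphere) yields eigenpairs $(\lambda_k, v_k)$ of $A_{L_k}$ with $\lambda_k \to \lambda^*$, $v_k \to v^*$, and $|\lambda_k| < 1$. A direct computation gives $v_k^* Y(L_k) v_k = \|H v_k\|^2 / (1 - |\lambda_k|^2) \to \infty$, which forces $\tr{Y(L_k)} \to \infty$ and hence $J(L_k) \to \infty$ through the first term of the key estimate.

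The main obstacle is the unbounded case $\|L_k\| \to \infty$; arguing by contradiction with $J(L_k) \leq M$, the key estimate forces $\|\mathcal{O}_{L_k} L_k\|_F$ bounded, and one must propagate this to boundedness of $\|\mathcal{O}_A L_k\|_F$ where $\mathcal{O}_A$ denotes the observability matrix of $(A,H)$. The key algebraic identity to derive is
\[
H A_L^k L = H A^k L + \sum_{s \geq 1} (-1)^s \sum_{\substack{a_0+\cdots+a_s+s = k \\ a_i \geq 0}} \bigl(H A^{a_0} L\bigr) \cdots \bigl(H A^{a_s} L\bigr),
\]
obtained by expanding $(A - LH)^k$ as a noncommutative polynomial and using the telescoping rule $L H \cdot A^j \cdot L = L\,(H A^j L)$. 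Since every term with $s \geq 1$ involves only factors $H A^{a_i} L$ with $a_i \leq k-1$, an induction on $k = 0, 1, \dots, n-1$ transfers boundedness of $H A_{L_k}^j L_k$ into boundedness of $H A^j L_k$. Observability of $(A,H)$ gives $\mathcal{O}_A$ full column rank, so $\|L_k\|_F \leq \|\mathcal{O}_A L_k\|_F / \sigma_{\min}(\mathcal{O}_A)$ is bounded, contradicting $\|L_k\| \to \infty$. The technical heart of the proof is isolating this clean noncommutative decomposition to invert the nonlinear dependence of $\mathcal{O}_L$ on $L$ without introducing singular factors (earlier attempts through $\mathcal{O}_L = M(L)\mathcal{O}_A$ fail precisely because $\sigma_{\min}(M(L))$ can degenerate).

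Finally, when $Q$ and $R$ are time-independent, $J$ is real-analytic by \Cref{lem:topology} and hence continuous, so the sublevel set $\mathcal{S}_\alpha := J^{-1}([0,\alpha])$ is relatively closed in $\mathcal{S}$; the coercive property just established forces $\mathcal{S}_\alpha$ to be both bounded in $\|\cdot\|$ and bounded away from $\partial \mathcal{S}$, which upgrades it to a closed bounded subset of $\mathbb{R}^{n \times m}$ fully contained in $\mathcal{S}$, and hence compact.
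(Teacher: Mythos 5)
Your proof is correct, but it reaches the conclusion by a genuinely different route than the paper, and the difference matters most in the case $\|L_k\|\to\infty$. The paper groups the Lyapunov series in blocks of length $n_0$ so as to pull out the scalar prefactor $\lambda_{\min}(H_{n_0}^\intercal(L)H_{n_0}(L))$, obtaining $J(L)\ge\lambda_{\min}(H_{n_0}^\intercal(L)H_{n_0}(L))\,\lambda_{\min}(\underline R)\,\|L\|_F^2$ for the unbounded case and the geometric-series bound $J(L)\ge\lambda_{\min}(H_{n_0}^\intercal(L)H_{n_0}(L))\,\lambda_{\min}(\underline Q)\sum_t\rho(A_L)^{2n_0t}$ at the boundary. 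You instead keep the whole quadratic form, $J(L)-\tr{R}\ge q\,\tr{Y(L)}+r\,\|\mathcal O_LL\|_F^2$, obtain the boundary blow-up by evaluating $Y(L_k)$ on an eigenvector attached to the marginally stable eigenvalue of $A_{L^*}$ (equivalent in substance to the paper's $\rho(A_L)^{2n_0t}$ bound, modulo the routine subsequence extraction you mention), and---this is the substantive difference---treat the unbounded case by inverting the nonlinear map $L\mapsto\mathcal O_LL$ through the noncommutative expansion of $HA_L^kL$ into products $(HA^{a_0}L)\cdots(HA^{a_s}L)$ with $a_i\le k-1$. That expansion is correct (each word of $(A-LH)^k$ containing $s$ injections contributes $(-1)^s$ times such a product with $a_0+\cdots+a_s=k-s$), the induction over $k=0,\ldots,n-1$ closes, and full column rank of $\mathcal O_A$ finishes the contradiction. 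What this buys you is that you never divide by an $L$-dependent quantity: the paper's prefactor $\lambda_{\min}(H_{n_0}^\intercal(L)H_{n_0}(L))$ is guaranteed positive only for each fixed $L$, and it can decay like $\|L\|_F^{-2}$ along unbounded directions (already for $n=2$, $H=[1\ 0]$ and $A$ the upper shift one finds $\det\big(H_2^\intercal(L)H_2(L)\big)=1$ while $\tr{H_2^\intercal(L)H_2(L)}=2+l_1^2$), so the displayed product $\lambda_{\min}(\cdot)\,\|L\|_F^2$ does not by itself force divergence; your telescoping argument supplies exactly the missing uniformity, at the cost of being a proof by contradiction rather than an explicit quantitative lower bound on the growth of $J$. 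The boundary case and the compactness of sublevel sets are otherwise handled essentially as in the paper; for the latter, make explicit (as the paper does) that the same eigenvector computation shows the series defining $J$ diverges for every $L\notin\mathcal S$, so that $\mathcal S_\alpha$, defined over all of $\mathbb R^{n\times m}$, indeed lies inside $\mathcal S$.
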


\begin{remark}
This approach recovers the claimed coercivity also in the control setting with weaker assumptions. In particular, using this result, one can replace the positive definite condition on the covariance of the initial condition in \cite{bu2019lqr}, i.e., $\Sigma \succ 0$, with just the controllability of $(A, \Sigma^{1/2})$. 
\end{remark}

\subsubsection{Gradient dominance property}
Next, we establish the gradient dominance property which resembles Lemma 3.12 in \cite{bu2019lqr}. While our approach utilizes a similar proof technique, this property is not trivial in this case as $H^\intercal H$ may not be positive definite. This, apparently minor issue, hinders establishing the gradient dominated property globally. However, we are able to recover this property on every sublevel sets of $J(L)$ which is sufficient for the subsequent convergence analysis. 

Before presenting the result, we compute the gradient of $J(L)$ to characterize its global minimizer and consider the following simplifying assumption for the rest of the analysis.
\begin{assumption}
Suppose $(A,H)$ is observable and the covariance matrices $Q\succ 0$ and $R \succ 0$ are time-independent.
\end{assumption}
The explicit gradient formula for $J$ takes the form,
\begin{gather*}
\begin{aligned}
    \nabla  J(L) 
    =& 2Y_{(L)} \left[-LR + A_L X_{(L)} H^{\intercal} \right],
\end{aligned}
\end{gather*}
where $Y_{(L)}$ is the unique solution of
\(Y = A_L^\intercal Y A_L + H^\intercal H.\)
While the derivation appears in \cite{talebi2022duality}, note that the expression for the gradient is consistent with Proposition 3.8 in \cite{bu2019lqr} after applying the duality relationship explained in \Cref{remark:duality}.

We also characterize the global minimizer $L^* = \arg\min_{L \in \mathcal{S}} J(L)$. The domain $\mathcal{S}$ is non-empty whenever $(A,H)$ is observable. 
Thus, by continuity of $L \to J(L)$, there exists some finite $\alpha > 0$ such that the sublevel set $\mathcal{S}_\alpha$ is non-empty and compact. Therefore, the minimizer  is an interior point and thus must satisfy the first-order optimality condition $\nabla J(L^*) = 0$. Moreover, by coercivity, the minimizer is stabilizing and unique satisfying,
\[L^* = A X^* H^\intercal \left(R + H X^* H^\intercal\right)^{-1},\]
with $X^*$ being the unique solution of 
\begin{equation}\label{eqn:Xstar}
    X^* = A_{L^*} X^* A_{L^*}^\intercal + Q + L^* R (L^*)^\intercal.
\end{equation}
As expected, the global minimizer $L^*$ is equal to the steady-state Kalman gain, but explicitly dependent on the noise covariances $Q$ and $R$.

\begin{proposition}\label{prop:graddom}
Let $L^*$ be the unique optimizer of $J(L)$ over $\mathcal{S}$ and consider any non-empty sublevel set $\mathcal{S}_\alpha$ for some $\alpha>0$. Then, the function $J(.):\mathcal{S}_\alpha \to \bR$ satisfies
\begin{equation*}
 c_1 [J(L) - J(L^*)] + c_2 \|L-L^*\|_F^2 \leq  \langle \nabla J(L), \nabla J(L) \rangle,
\end{equation*}
\[c_3 \|L - L^*\|_F^2 \leq J(L) - J(L^*),\]
for some positive constants $c_1 = c_1(\alpha)> 0$, $c_2 = c_2(\alpha) > 0$ and $c_3 = c_3(\alpha) > 0$ that are independent of $L$.
\end{proposition}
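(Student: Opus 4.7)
The plan is to reduce both inequalities to a pair of cost-difference identities via the Lyapunov structure, closing the argument through observability of $(A,H)$ combined with compactness of $\mathcal{S}_\alpha$ from \Cref{prop:coercive}. I would first derive the key identity
\[ J(L) - J(L^*) = \tr{(L - L^*)(R + H X^* H^\intercal)(L - L^*)^\intercal Y_{(L)}}. \]
Substituting $A_L = A_{L^*} - (L - L^*)H$ into the Lyapunov equation~\eqref{eqn:Xstar} and subtracting the analogous equation for $X_{(L)}$, all cross terms collapse thanks to the first-order optimality $A_{L^*} X^* H^\intercal - L^* R = 0$, leaving
\[ (X_{(L)} - X^*) - A_L (X_{(L)} - X^*) A_L^\intercal = (L - L^*)(R + H X^* H^\intercal)(L - L^*)^\intercal. \]
Taking trace against $H^\intercal H$ and recognizing $Y_{(L)} = \sum_{t \geq 0}(A_L^\intercal)^t H^\intercal H A_L^t$ then yields the stated identity.

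The quadratic lower bound $c_3 \|L - L^*\|_F^2 \leq J(L) - J(L^*)$ follows immediately from this identity provided $\lambda_{\min}(Y_{(L)})$ is uniformly bounded below on $\mathcal{S}_\alpha$, since $R + HX^*H^\intercal \succeq R \succ 0$. For this uniform bound I would argue: observability of $(A,H)$ passes to $(A_L, H)$ for every $L$ (since $A_L x = \lambda x$ and $Hx = 0$ force $Ax = \lambda x$), so the observability Gramian $Y_{(L)}$ is strictly positive definite for every $L \in \mathcal{S}$; continuity of $L \mapsto Y_{(L)}$ together with compactness of $\mathcal{S}_\alpha$ from \Cref{prop:coercive} then delivers a uniform lower bound $\lambda_{\min}(Y_{(L)}) \geq \mu_\alpha > 0$ on the sublevel set.

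For the gradient-dominance inequality I would derive the more general identity
\[ J(L') - J(L) = -2\tr{(L' - L) E_L^\intercal Y_{(L')}} + \tr{(L' - L)(R + H X_{(L)} H^\intercal)(L' - L)^\intercal Y_{(L')}} \]
for arbitrary stabilizing $L, L'$, with $E_L := -LR + A_L X_{(L)} H^\intercal$ (so that $\nabla J(L) = 2 Y_{(L)} E_L$), via the same Lyapunov manipulation but without invoking optimality. Setting $L' = L^*$ and completing the square in $\Delta = L^* - L$ gives
\[ J(L) - J(L^*) \leq \tr{E_L (R + H X_{(L)} H^\intercal)^{-1} E_L^\intercal Y_{(L^*)}} \leq \tfrac{\|Y_{(L^*)}\|}{\lambda_{\min}(R)} \|E_L\|_F^2 \leq \tfrac{\|Y_{(L^*)}\|}{4\, \mu_\alpha^2\, \lambda_{\min}(R)} \|\nabla J(L)\|_F^2, \]
using $R + HX_{(L)}H^\intercal \succeq R$ and $\|E_L\|_F \leq \|\nabla J(L)\|_F/(2\mu_\alpha)$. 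Combining this gradient-dominance bound with the quadratic lower bound then produces constants $c_1, c_2, c_3$ as claimed.

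The main obstacle is the uniform positive lower bound on $\lambda_{\min}(Y_{(L)})$ over $\mathcal{S}_\alpha$: pointwise positivity requires observability of $(A,H)$ precisely to compensate for the possibly rank-deficient $H^\intercal H$, and uniformity hinges on the sublevel-set restriction because $\lambda_{\min}(Y_{(L)})$ can degenerate as $L$ approaches $\partial \mathcal{S}$ on the unbounded domain $\mathcal{S}$. This is precisely why the compactness guaranteed by \Cref{prop:coercive} is invoked, and why the gradient-dominated property must be restricted to sublevel sets rather than asserted globally.
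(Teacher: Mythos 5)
Your proposal is correct, and for the second inequality it coincides with the paper's argument: both exploit the vanishing of $E_{L^*}=-L^*R+A_{L^*}X^*H^\intercal$ to obtain the exact identity $J(L)-J(L^*)=\tr{(L-L^*)(R+HX^*H^\intercal)(L-L^*)^\intercal Y_{(L)}}$, and both obtain the uniform bound $\lambda_{\min}(Y_{(L)})\geq \mu_\alpha>0$ from observability of $(A_L,H)$ plus compactness of $\mathcal{S}_\alpha$ (the paper calls this constant $\kappa_\alpha$). For the gradient-dominance inequality the routes diverge in execution while sharing the same Lyapunov-difference starting point. The paper works at the matrix level: it applies Young's inequality with parameter $a=\lambda_{\min}(R)/2$ to the matrix identity for $X-X^*$ propagated through $A_{L^*}$, obtains a semidefinite ordering $X-X^*\preccurlyeq \tfrac{2}{\lambda_{\min}(R)}\Tilde{X}-\tfrac{\lambda_{\min}(R)}{2}\widehat{X}$ via comparison of two auxiliary Lyapunov solutions, and only then traces against $H^\intercal H$; this delivers the $c_1$ and $c_2$ terms simultaneously in one pass. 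You instead trace first to get the scalar cost-difference identity $J(L')-J(L)=-2\tr{(L'-L)E_L^\intercal Y_{(L')}}+\tr{(L'-L)(R+HX_{(L)}H^\intercal)(L'-L)^\intercal Y_{(L')}}$ and complete the square exactly in $L^*-L$. Your completion of squares is tight where the paper's Young split is not, so you gain a factor of $2$ in $c_1$; the price is that you must then manufacture the combined inequality by splitting your gradient-dominance bound in half and feeding one half through the quadratic lower bound, which inflates $c_2$ relative to the paper's direct derivation. Both yield valid $\alpha$-dependent constants, and your closing discussion of why positivity of $Y_{(L)}$ (rather than of $H^\intercal H$) and the sublevel-set restriction are the essential ingredients matches the paper's reasoning exactly.
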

\begin{remark}
The proposition above implies that $J(.)$ is gradient dominated on $\mathcal{S}_\alpha$, i.e., for any $L \in \mathcal{S}_\alpha$ we have
\[\textstyle J(L) - J(L^*) \leq  \frac{1}{c_1(\alpha)} \langle \nabla J(L), \nabla J(L) \rangle.\]
Note that the first inequality characterizes the dominance gap in terms of the iterate error from the optimality. This is useful in obtaining the iterate convergence results in the next section where we analyze first-order methods in order to solve the minimization problem~\eqref{eq:opt-time-indepen}.
\end{remark}

\subsection{\acf{gd}}
Here, we consider the \ac{gd} policy update:
\begin{flalign*} 
\text{[\ac{gd}]} \qquad\qquad\quad L_{k+1} = L_k - \eta_k \nabla J(L_k), &&
\end{flalign*}
for $k \in \mathbb Z$ and a positive  stepsize $\eta_k$.
As a direct consequence of \Cref{prop:graddom}, we can guarantee convergence for the \acf{gf} algorithm (see \cite{talebi2022duality} for details). But then, establishing convergence for \ac{gd} relies on carefully choosing the stepsize $\eta_k$, and bounding the rate of change of $\nabla J(L)$---at least on each sublevel set. So, the following lemma provides a Lipschitz bound for $\nabla J(L)$ on every sublevel set. This results resembles its ``dual'' counterpart in \cite[Lemma 7.9]{bu2019lqr}, however, it is \emph{not} implied directly by the duality argument as $H^\intercal H$ may not be positive definite.

\begin{lemma}\label{lem:lipschitz}
Consider any (non-empty) sublevel set $\mathcal{S}_\alpha$ for some $\alpha>0$. Then,
\[\|\nabla J(L_1) - \nabla J(L_2)\|_F \leq \ell\; \|L_1 - L_2\|_F, \quad \forall L_1,L_2 \in \mathcal{S}_\alpha,\]
for some positive constant $\ell = \ell(\alpha)>0$ that is independent of both $L_1$ and $L_2$.
\end{lemma}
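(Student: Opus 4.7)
The plan is to adapt the proof technique of \cite[Lemma 7.9]{bu2019lqr}, substituting the compactness of the sublevel set $\mathcal{S}_\alpha$ from \Cref{prop:coercive} for the non-degeneracy condition $\Sigma \succ 0$ that the LQR argument relies on. Since $\mathcal{S}_\alpha$ is compact and contained in the open set $\mathcal{S}$ on which $L\mapsto X_{(L)}$, $L\mapsto Y_{(L)}$, and $L\mapsto \nabla J(L)$ are real-analytic (\Cref{lem:topology}), each of these maps, together with $\|L\|_F$ and $\|A_L\|_F$, is uniformly bounded on $\mathcal{S}_\alpha$ by some constant $C_0=C_0(\alpha)$. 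These uniform \emph{a priori} bounds will take over the role played by $\Sigma\succ 0$ in \cite{bu2019lqr}.

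The key technical step is showing that $X_{(\cdot)}$ and $Y_{(\cdot)}$ are Lipschitz on $\mathcal{S}_\alpha$. For $L_1,L_2\in\mathcal{S}_\alpha$, set $\Delta\coloneqq L_1-L_2$ and note $A_{L_1}-A_{L_2}=-\Delta H$. Subtracting the Lyapunov equations defining $X_{(L_1)}$ and $X_{(L_2)}$ and regrouping gives
\[
    X_{(L_1)}-X_{(L_2)} = A_{L_1}\bigl(X_{(L_1)}-X_{(L_2)}\bigr)A_{L_1}^\intercal + E_X,
\]
where $E_X$ is a sum of cross terms in $\Delta H\,X_{(L_2)}A_{L_2}^\intercal$, $A_{L_1}X_{(L_2)}H^\intercal\Delta^\intercal$, $L_1 R\Delta^\intercal$, and $\Delta R L_2^\intercal$, and hence $\|E_X\|_F\leq c_E\|\Delta\|_F$ by the uniform bounds above. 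Inverting the discrete Lyapunov operator $\mathcal{T}_{L_1}\colon M\mapsto M-A_{L_1}MA_{L_1}^\intercal$ yields $X_{(L_1)}-X_{(L_2)}=\mathcal{T}_{L_1}^{-1}(E_X)$; combined with a uniform bound on $\|\mathcal{T}_{L}^{-1}\|_{\mathrm{op}}$ over $\mathcal{S}_\alpha$ (addressed below), this delivers $\|X_{(L_1)}-X_{(L_2)}\|_F\leq c_X\|\Delta\|_F$. An entirely parallel argument with forcing $H^\intercal H$ in place of $Q+LRL^\intercal$ gives $\|Y_{(L_1)}-Y_{(L_2)}\|_F\leq c_Y\|\Delta\|_F$. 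The Lipschitz property of $\nabla J$ is then assembled by expanding
\[
\nabla J(L_1)-\nabla J(L_2) = 2\bigl[Y_{(L_1)}-Y_{(L_2)}\bigr]\bigl(-L_1 R+A_{L_1}X_{(L_1)}H^\intercal\bigr) + 2Y_{(L_2)}\bigl[-\Delta R + (A_{L_1}X_{(L_1)}-A_{L_2}X_{(L_2)})H^\intercal\bigr],
\]
and further splitting $A_{L_1}X_{(L_1)}-A_{L_2}X_{(L_2)} = A_{L_1}(X_{(L_1)}-X_{(L_2)}) - \Delta H\,X_{(L_2)}$. Each resulting term is a product of uniformly bounded factors with one factor that is linear in $\|\Delta\|_F$, and the triangle inequality then yields the desired constant $\ell=\ell(\alpha)$.

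The main obstacle, flagged by the surrounding discussion, is the uniform bound on $\|\mathcal{T}_L^{-1}\|_{\mathrm{op}}$ over $\mathcal{S}_\alpha$. The corresponding step in \cite[Lemma 7.9]{bu2019lqr} uses the estimate $\|\sum_{t\geq 0} A_L^t M (A_L^\intercal)^t\|\leq \|Y_{(L)}\|\cdot\|M\|/\lambda_{\min}(\Sigma)$, which is unavailable here because $H^\intercal H$ can be singular. The resolution must therefore be non-constructive: $L\mapsto\rho(A_L)$ is continuous on $\mathcal{S}$, and since $\mathcal{S}_\alpha$ is compact and disjoint from $\partial\mathcal{S}=\{L:\rho(A-LH)=1\}$, the spectral radius $\rho(A_L)$ is strictly bounded away from $1$ on $\mathcal{S}_\alpha$; combined with uniform bounds on $\|A_L\|_F$, this forces $\|\mathcal{T}_L^{-1}\|_{\mathrm{op}}$ to be uniformly bounded via a Gelfand-type estimate, at the price that the resulting $\ell(\alpha)$ is not available in closed form.
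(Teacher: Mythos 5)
Your proposal is correct and follows essentially the same route as the paper's proof: the same decomposition of $\nabla J(L_1)-\nabla J(L_2)$ into terms driven by $Y_{(L_1)}-Y_{(L_2)}$, $L_1-L_2$, and $X_{(L_1)}-X_{(L_2)}$, the same difference-Lyapunov-equation bounds for the latter two, and the same appeal to compactness of $\mathcal{S}_\alpha$ (from \Cref{prop:coercive}) to make all constants uniform. The only cosmetic difference is in how the inverse Lyapunov operator is controlled: where you invoke a Gelfand-type estimate from $\rho(A_L)$ being bounded away from $1$, the paper compares the difference equations against the reference solutions $Z_{(L)}$ (forcing $I$) and $X_{(L_2)}/\lambdamin(Q)$, whose norms are then bounded by continuity on the compact sublevel set---an equivalent, slightly more explicit device.
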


In what follows, we establish linear convergence of the \ac{gd} algorithm. Our convergence result only depends on the value of $\alpha$ for the initial sublevel set $\mathcal{S}_\alpha$ that contains $L_0$. Note that our proof technique is distinct from those in \cite{bu2019lqr} and \cite{mohammadi2021convergence}; nonetheless, it involves a similar argument using the gradient dominance property of $J$.
\begin{theorem}\label{thm:graddescent }
Consider any sublevel set $\mathcal{S}_\alpha$ for some $\alpha >0$. Then, for any initial policy $L_0 \in \mathcal{S}_\alpha$, the \ac{gd} updates with any fixed stepsize $\eta_k = \eta \in (0, 1/\ell(\alpha)]$ converges to optimality at a linear rate of $1- \eta c_1(\alpha)/2$ (in both the function value and the policy iterate). In particular, we have
\[J(L_k) - J(L^*) \leq [\alpha - J(L^*)] (1- \eta c_1(\alpha)/2)^k,\]
and 
\(\|L_k - L^*\|_F^2 \leq \left[\frac{\alpha - J(L^*)}{c_3(\alpha)}\right] (1- \eta c_1(\alpha)/2)^k,\)
with $c_1(\alpha)$ and $c_3(\alpha)$ as defined in \Cref{prop:graddom}.
\end{theorem}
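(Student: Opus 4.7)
The plan is to combine the descent lemma coming from \Cref{lem:lipschitz} with the gradient dominance of \Cref{prop:graddom} in an inductive fashion, and then convert the function-value rate into the iterate rate via the quadratic-growth inequality of the same proposition. The main technical subtlety is that \Cref{lem:lipschitz} supplies the Lipschitz constant $\ell(\alpha)$ only on $\mathcal{S}_\alpha$, which is not known to be convex; hence neither the next iterate nor the segment joining $L_k$ to $L_{k+1}$ is \emph{a priori} contained in $\mathcal{S}_\alpha$, and a continuity-based argument will be needed to guarantee the descent step is valid at every stage.

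First I would record the local descent inequality: whenever $L$, $L - \eta \nabla J(L)$, and the segment between them all lie in $\mathcal{S}_\alpha$, integrating the Lipschitz bound along this segment yields
\[
J\bigl(L - \eta \nabla J(L)\bigr) \leq J(L) - \eta\bigl(1 - \tfrac{\eta\,\ell(\alpha)}{2}\bigr)\|\nabla J(L)\|_F^2 \leq J(L) - \tfrac{\eta}{2}\|\nabla J(L)\|_F^2,
\]
where the second inequality uses $\eta \leq 1/\ell(\alpha)$.

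Second I would prove inductively that every iterate remains in $\mathcal{S}_\alpha$. Assuming $L_k \in \mathcal{S}_\alpha$, set $\phi(t) := L_k - t\,\nabla J(L_k)$ for $t \in [0,\eta]$ and let $\tau := \sup\{\, t \in [0,\eta] : \phi([0,t]) \subset \mathcal{S}_\alpha\,\}$. Closedness of $\mathcal{S}_\alpha$ gives $\phi(\tau) \in \mathcal{S}_\alpha$, so the descent inequality above is valid on $[0, \tau]$ and yields $J(\phi(t)) \leq J(L_k) \leq \alpha$ for all $t \in [0,\tau]$. If $\tau < \eta$, continuity of $J$ together with the maximality of $\tau$ forces $J(\phi(\tau)) = \alpha$; combined with $J(L_k) \leq \alpha$ and the fact that the multiplier $\tau(1 - \tau\,\ell(\alpha)/2)$ is strictly positive (since $\eta \leq 1/\ell(\alpha)$), the descent inequality then collapses to $\|\nabla J(L_k)\|_F^2 = 0$. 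By the iterate bound of \Cref{prop:graddom} this is only possible at $L_k = L^*$, where the iteration is stationary and the theorem holds trivially. Otherwise $\tau = \eta$, so $L_{k+1}$ together with the whole segment $[L_k, L_{k+1}]$ lies in $\mathcal{S}_\alpha$, and the induction closes.

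Third, chaining the per-step bound $J(L_{k+1}) - J(L^*) \leq J(L_k) - J(L^*) - \tfrac{\eta}{2}\|\nabla J(L_k)\|_F^2$ with the gradient-dominance inequality $c_1(\alpha)[J(L_k) - J(L^*)] \leq \|\nabla J(L_k)\|_F^2$ from \Cref{prop:graddom} gives the geometric recursion
\[
J(L_{k+1}) - J(L^*) \leq \bigl(1 - \tfrac{\eta\,c_1(\alpha)}{2}\bigr)\bigl[J(L_k) - J(L^*)\bigr],
\]
which, unrolled against $J(L_0) - J(L^*) \leq \alpha - J(L^*)$, yields the first stated bound. Applying the quadratic-growth inequality $c_3(\alpha)\|L_k - L^*\|_F^2 \leq J(L_k) - J(L^*)$ of \Cref{prop:graddom} then produces the iterate bound. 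The principal obstacle, as foreshadowed, is the possible non-convexity of $\mathcal{S}_\alpha$: the supremum-and-continuity argument in the second step is precisely what simultaneously keeps the trajectory inside $\mathcal{S}_\alpha$ and disposes of the degenerate case in which $L_k$ happens to sit on the boundary level set $\{J = \alpha\}$; once containment is guaranteed, the linear rate and the iterate bound are standard consequences of the gradient-dominance / quadratic-growth template.
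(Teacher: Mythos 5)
Your proposal is correct and follows essentially the same route as the paper: a Fundamental-Theorem-of-Calculus descent estimate valid only while the segment stays in the (possibly non-convex) set $\mathcal{S}_\alpha$, a supremum-plus-continuity argument showing the segment never exits that set, then gradient dominance for the linear rate and quadratic growth for the iterate bound. The only point worth tightening is that your claim that the multiplier $\tau(1-\tau\,\ell(\alpha)/2)$ is strictly positive presupposes $\tau>0$, which---exactly as the paper notes---requires the first-order strict decrease of $J$ along $-\nabla J(L_k)$ when $\nabla J(L_k)\neq 0$ to rule out $\tau=0$ in the boundary case $J(L_k)=\alpha$.
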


\section{Algorithms and Numerical Simulations}
\label{sec:algorithms}
In this section, we discuss numerical algorithms in order to solve the minimization problem~\eqref{eq:opt-time-indepen}. Note that, it is not possible to implement the gradient-descent algorithm because evaluating the gradient involves the  noise covariance matrices $Q$ and $R$,  assumed to be unknown.  
Instead, here we explore alternative approaches to recover the gradient information from the data at hand. 

\subsubsection{\acf{sgd}}
Herein, we allow a variable initial time $t_0$ (instead of just $t_0=0$) for the system~\eqref{eqn:sysdyn} and use  $\mathcal Y_{\{t_0:T\}}:=\{y(t_0),y(t_0+1),\ldots, y(T-1)\}$ to denote the measurement time-span. Using this notation, the statistical steady-state can be equivalently considered as the limit $t_0 \to -\infty$  with fixed $T$. 

Recall that any choice of $L \in \mathcal{S}$ corresponds to a filtering strategy that outputs a prediction $\hat{y}_L(T)$, which with the variable initial time $t_0$, is given by  
\begin{equation*}\textstyle
  \hat{y}_L(T) = H A_L^{T-t_0} m_0 +\sum_{t=t_0}^{T-1} H  A_L^{T-t-1} L y(t).
\end{equation*}
Also, let $e_{\{t_0:T\}}(L) \coloneqq y(T) - \hat{y}_L(T)$ denote the incurred error corresponding to this filtering strategy and let   
\[\varepsilon(L,\mathcal Y_{\{t_0:T\}}) \coloneqq \|e_{\{t_0:T\}}(L)\|^2,\]
denote the squared-norm of the error, where the dependence on the measurement sequence $\mathcal Y_{\{t_0:T\}}$ is explicitly specified. 

The optimization objective function is then to minimize the expectation of the squared-norm of the error over all possible random measurement sequences:  
\begin{equation*}
    J_{\{t_0:T\}}(L):=\E{}{\varepsilon(L,\mathcal Y_{\{t_0:T\}})};
\end{equation*}
at the steady-state, we obtain
\(\lim_{t_0\to -\infty}J_{\{t_0:T\}}(L) = J(L).\)

The \ac{sgd} algorithm aims to solve this optimization problem by replacing the gradient, in the \ac{gd} update, with an unbiased estimate of the gradient in terms of samples from the measurement sequence. In particular, assuming access to an oracle that produces independent realization of the measurement sequence, say $M$ randomly selected measurements  $\{\mathcal Y_{[t_0,T]}^i\}_{i=1}^M$, the gradient can be approximated according to
\begin{equation*}\textstyle
    \nabla J_{\{t_0:T\}}(L) \approx \frac{1}{M}\sum_{i=1}^M \nabla_L \varepsilon(L,\mathcal Y^i_{\{t_0:T\}}).
\end{equation*}
This forms an unbiased estimate of the gradient, i.e.,
\begin{equation*}\textstyle
     \E{}{\frac{1}{M}\sum_{i=1}^M \nabla_L \varepsilon(L,\mathcal Y^i_{\{t_0:T\}})} =  \nabla J_{\{t_0:T\}}(L),
\end{equation*}
with variance that converges to zero with the rate $O(\frac{1}{M})$ as the number of samples increase. The number $M$ is referred to as the batch-size. 

Using the stochastic estimation of the gradient, the algorithm proceeds as follows: we let,
\begin{flalign*} \textstyle
\text{[\ac{sgd}]} \qquad L_{k+1} = L_k - \frac{\eta_k}{M}\sum_{i=1}^M  \nabla_L \varepsilon(L,\mathcal Y^i_{\{t_0:T\}}), &&
\end{flalign*}
for $k \in \mathbb Z$, where $\eta_k>0$ is the step-size and $\{\mathcal Y^i_{\{t_0:T\}}\}$ represent $M$ fresh realizations of the measurement sequence. 

Although the convergence of the \ac{sgd} algorithm is expected to follow similar to the \ac{gd} algorithm under the  gradient dominance condition and Lipschitz property, the analysis becomes complicated due to the possibility of the iterated gain $L_k$ leaving the sub-level sets. It is expected that a convergence guarantee would hold under high-probability due to concentration of the gradient estimate around the true gradient. Complete analysis in this direction will be presented in our subsequent work. 

Finally, for implementation purposes, we compute the  gradient estimate  explicitly in terms of the measurement sequence and the filtering policy $L$.
\begin{lemma}\label{lem:grad-approx}
Given $L \in \mathcal{S}$ and a sequence of measurements $\mathcal Y= \{y(t)\}_{-\infty}^T$, we have,
\begin{multline*}
    \nabla_L \varepsilon(L,\mathcal Y)= -2\sum_{t=0}^{\infty} (A_L^\intercal)^{t} H^\intercal e_{T}(L) y^\intercal(T-t-1)    \\
    +2\sum_{t=1}^{\infty}\sum_{k=1}^{t} (A_L^\intercal)^{t-k} H^\intercal e_{T}(L) y^\intercal(T-t-1) L^\intercal (A_L^\intercal)^{k-1} H^\intercal .
\end{multline*}
\end{lemma}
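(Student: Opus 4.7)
The plan is to compute the variation of $\varepsilon(L,\mathcal Y) = \|e_T(L)\|^2$ with respect to $L$ directly and then read off $\nabla_L \varepsilon$ by matching $d\varepsilon$ against $\mathrm{tr}\big[(\nabla_L \varepsilon)^\intercal\, dL\big]$. The starting point is the steady-state closed form
\[
    \hat{y}_L(T) = \sum_{s=0}^{\infty} H A_L^{s} L\, y(T-s-1),
\]
obtained from the expression preceding the lemma by sending $t_0 \to -\infty$; the initial-condition term $H A_L^{T-t_0} m_0$ vanishes and the series converges absolutely because $L \in \mathcal S$ makes $A_L$ Schur stable. The same Schur stability will dominate the tails of every series that appears below, so termwise differentiation is legitimate.

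The key algebraic step is the Leibniz expansion of $A_L^s$. Since $dA_L = -(dL)H$, I obtain
\[
    d(A_L^{s} L) = A_L^{s}\, dL - \sum_{j=0}^{s-1} A_L^{j}\, (dL)\, H\, A_L^{s-1-j}\, L .
\]
Substituting into $d\hat y_L(T) = \sum_{s=0}^\infty H\, d(A_L^{s} L)\, y(T-s-1)$ and combining with the chain rule $d\varepsilon = -2\, e_T(L)^\intercal\, d\hat y_L(T)$ expresses $d\varepsilon$ as the sum of a single-power series and a double-power series in $A_L$. Cycling the trace via $\mathrm{tr}[e_T(L)^\intercal H A_L^{s} (dL)\, y(T-s-1)] = \mathrm{tr}[y(T-s-1)\, e_T(L)^\intercal H A_L^{s} (dL)]$ and transposing to match the gradient convention immediately converts the single-power series into the first sum of the lemma, with coefficient $(A_L^\intercal)^{s} H^\intercal e_T(L)\, y^\intercal(T-s-1)$.

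The remaining task is to bring the double series into the stated form. A parallel trace-cycle and transpose applied to $e_T(L)^\intercal H A_L^{j} (dL) H A_L^{s-1-j} L\, y(T-s-1)$ yields a contribution of shape $(A_L^\intercal)^{j} H^\intercal e_T(L)\, y^\intercal(T-s-1)\, L^\intercal (A_L^\intercal)^{s-1-j} H^\intercal$ summed over $0 \le j \le s-1$ and $s \ge 1$. Setting $t = s$ and $k = j+1$ shifts this to $1 \le k \le t$ with exponents $(k-1,\, t-k)$, and the symmetric relabel $k \mapsto t-k+1$ within the same range swaps the two exponents into the order stated in the lemma. The computation is essentially mechanical once the setup is correct; the only step demanding real care is justifying that the infinite sums may be differentiated termwise, which I would handle by producing a uniform summable majorant from the Schur contraction $\|A_L^s\| \lesssim \rho^s$ and the at-most polynomial growth of the realized measurement sequence $\mathcal Y$.
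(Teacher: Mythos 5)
Your proposal is correct and follows essentially the same route as the paper's proof: a first-order expansion of $e_T(L+\Delta)-e_T(L)$ into a single series (from the explicit $\Delta$) and a double series (from perturbing the powers of $A_L$ via $dA_L=-\Delta H$), followed by trace cycling and transposition to read off the gradient, with your index relabeling reproducing the paper's $E_2$ term exactly. The only addition is your explicit remark on justifying termwise differentiation via a Schur-stability majorant, which the paper leaves implicit.
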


\begin{remark}
Computing the gradient above only requires the knowledge of the system parameters $A$ and $H$, and does \emph{not} require the noise covariance information $Q$ and $R$.
\end{remark}

\subsubsection{Numerical Simulations}
Herein, we showcase the application of the developed theory for improving the estimation policy for an LTI system. Specifically, we consider an undamped mass-spring system  with known parameters $(A,H)$ with $n=2$ and $m=1$. In the hindsight, we consider a variance of $0.1$ for each state dynamic noise, a state covariance of $0.05$ and a variance of $0.1$ for the observation noise. Assuming a trajectory of length $T$ at every iteration, the approximate gradient is obtained as in \Cref{lem:grad-approx}, only requiring an output data sequence collected from the system in \cref{eqn:sysdyn}. Then, the progress of policy updates using the \ac{sgd} algorithm for different values of trajectory length $T$ and batch size $M$ are depicted in \Cref{fig:sim} where each figure shows statistics over 20 rounds of simulation. The figure demonstrates a ``sublinear rate'' of convergence which is expected as every update only relies on an approximation of the gradient---in contrast to the linear convergence established for \ac{gd}. Finally, \Cref{fig:3} demonstrates also the convergence in the Kalman gain as predicted by the properties of $J$ studied in \S\ref{sec:theory} (see \Cref{prop:graddom}).

\begin{figure*}[!pt]
    \centering
        \begin{subfigure}[b]{0.33\textwidth}
            \centering
            \includegraphics[width=\linewidth, trim={0cm 0cm 0 0cm}, clip]{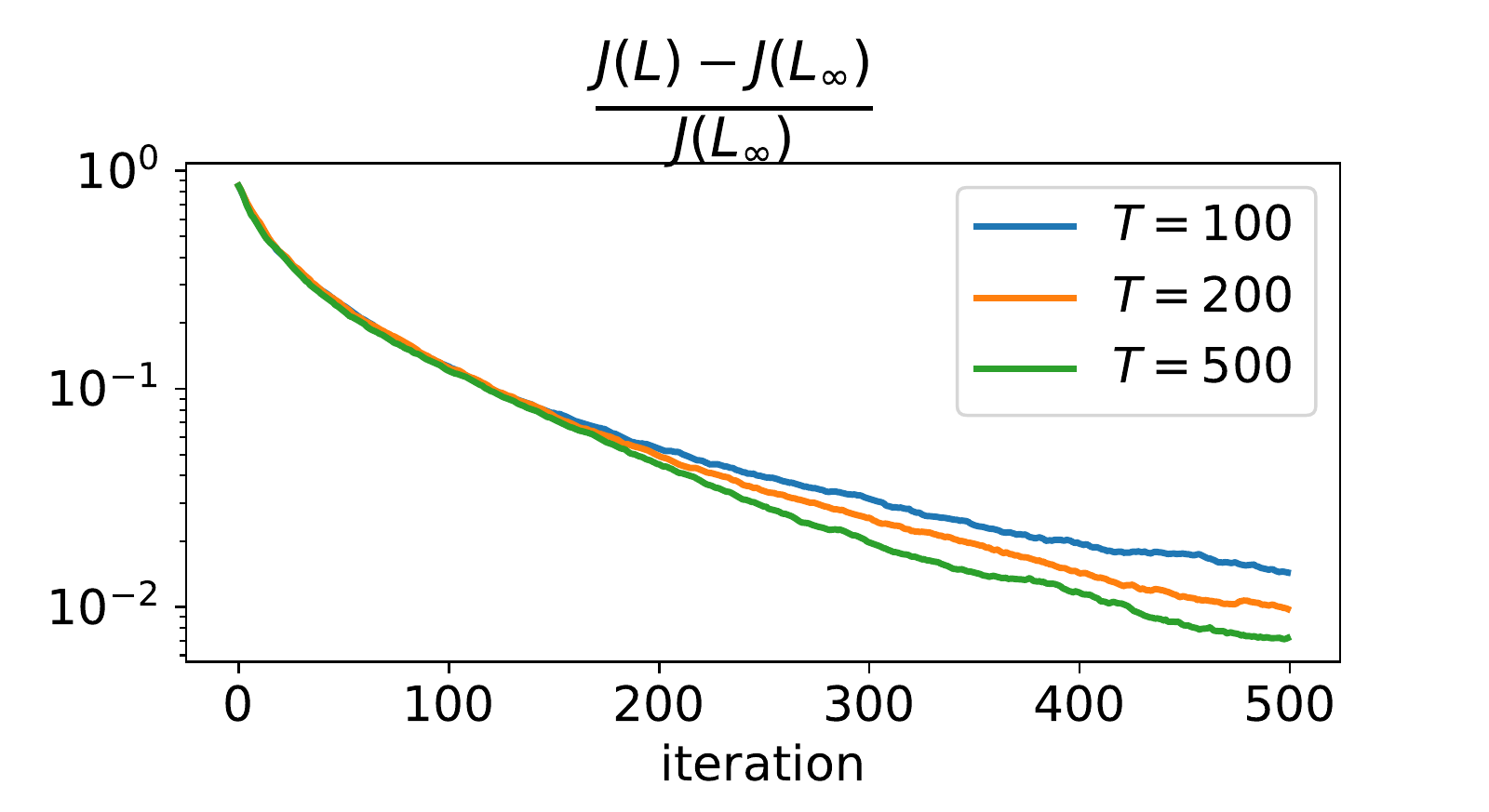}
             \caption{}
             \label{fig:1}
        \end{subfigure}
        \hfill
        \begin{subfigure}[b]{0.33\textwidth}
             \centering
             \includegraphics[width=\linewidth, trim={0cm 0cm 0 0cm}, clip]{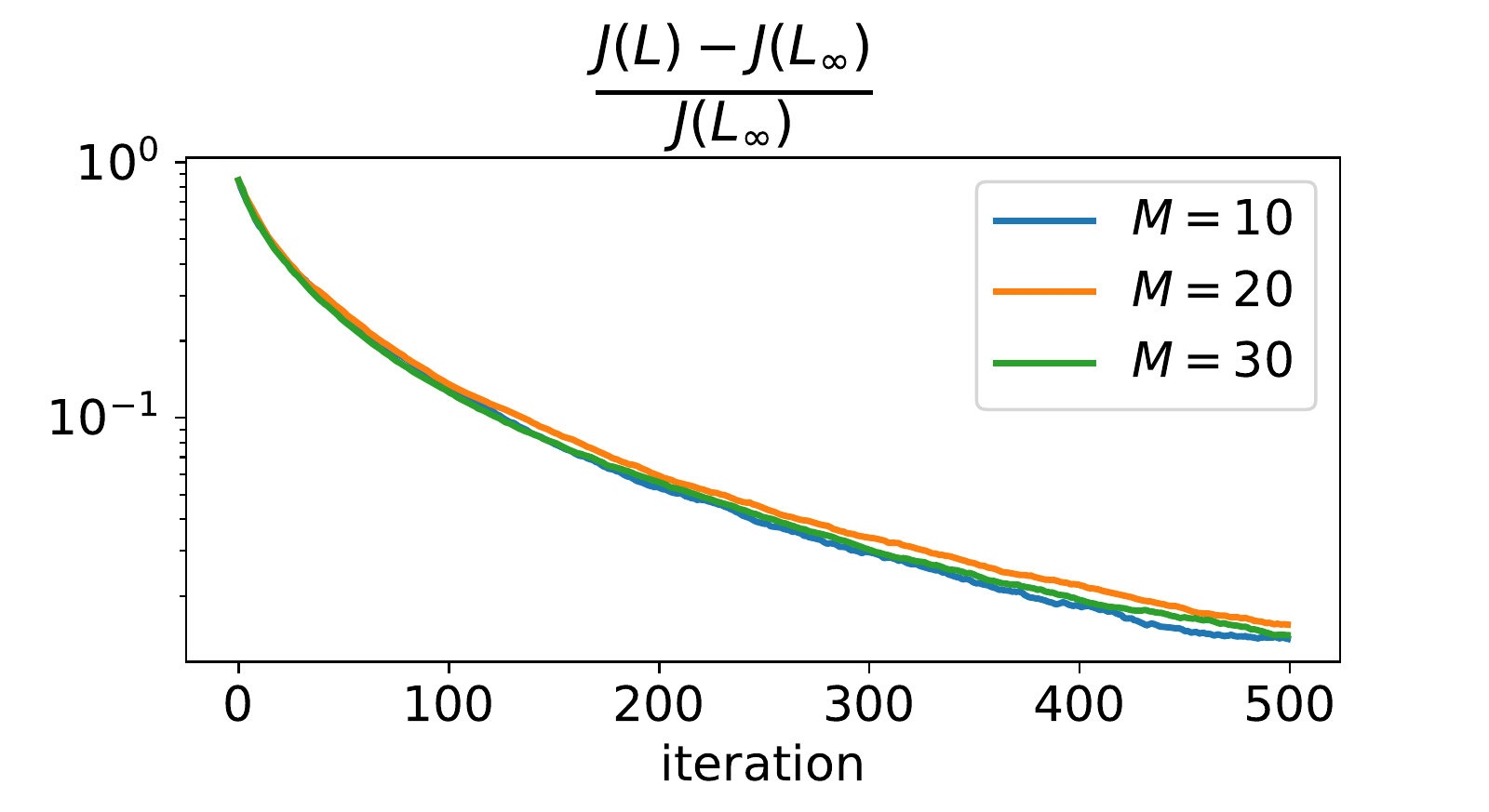}
             \caption{}
             \label{fig:2}
        \end{subfigure}
    \begin{subfigure}[b]{0.32\textwidth}
         \centering
         \pdfset{%
              20 w 
              0 J  
              0 j  
            }{\includegraphics[height= 1.3in, width=\linewidth, trim={0.8cm 0cm 0cm 0cm}, clip]{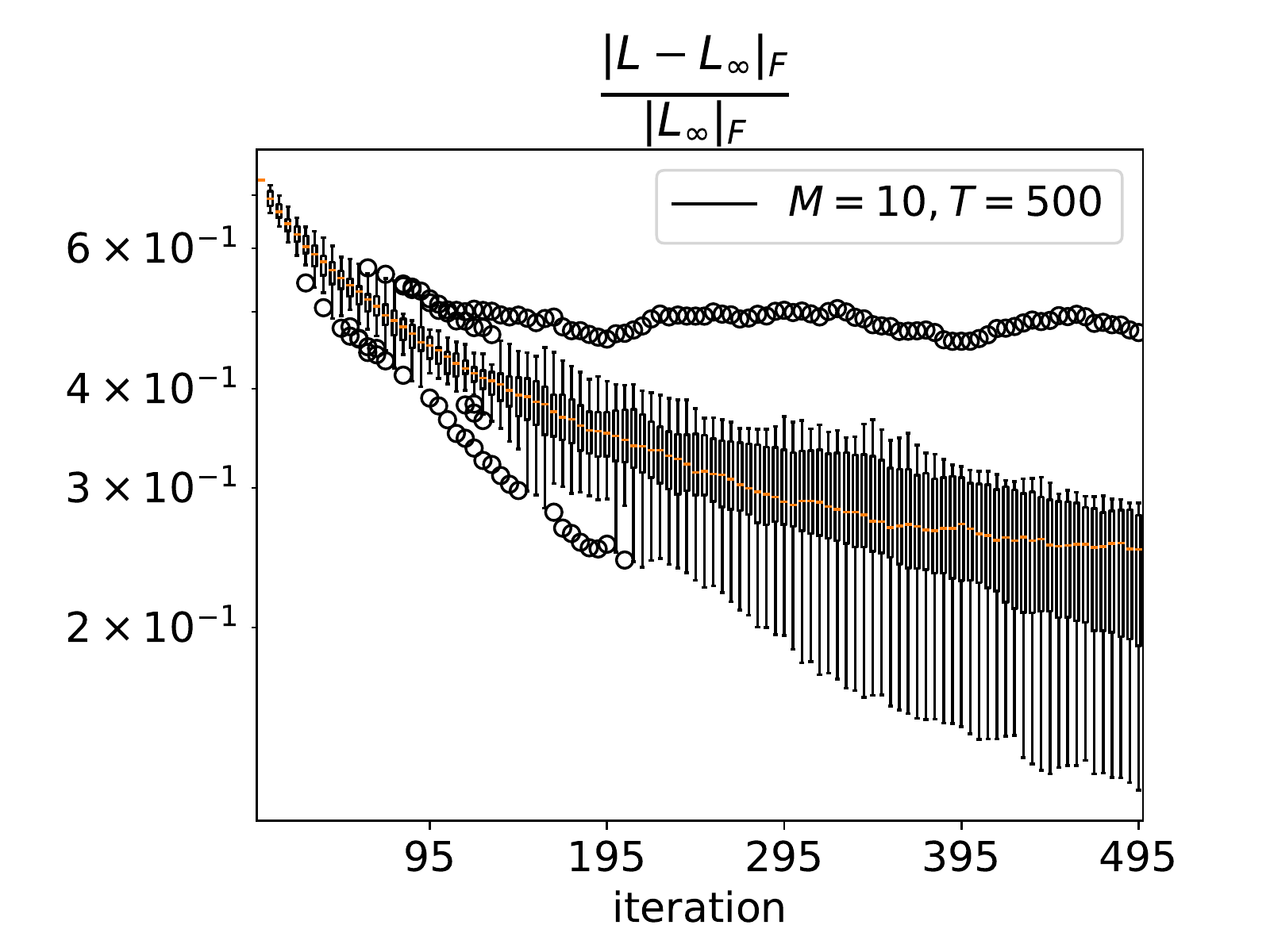}}
         \caption{}
         \label{fig:3}
    \end{subfigure}
    \caption{\small \ac{sgd} directly from output data and without prior knowledge of the noise covariances or state information. Mean progress of the normalized estimation error over 20 simulations obtained from data trajectories of a) different length $T$ and b) different batch size $M$; also, c) progress in the Kalman gain with the mean in orange, variance in black line and the outliers in circles.}
    \label{fig:sim}
    \vspace{-0.5cm}
\end{figure*}

\section{Conclusions}\label{sec:conclusion}
In this work, we considered the problem of learning the optimal Kalman gain with unknown process and measurement noise covariances. We proposed a direct stochastic \ac{po} algorithm with theoretical analysis that are based on the duality between optimal control and estimation. The extension for the other variant of the problem, where the dynamics/observation parameters are also (partially) unknown, is an immediate future direction of this work. 
\bibliographystyle{ieeetr}
\bibliography{citations}

\newpage
\appendix

\subsection{Proof of \Cref{prop:duality}}
\begin{proof}
By pairing the original state dynamics~\eqref{eqn:sysdyn} and its dual~\eqref{eqn:adjdyn}:
\begin{equation*}
    z(t+1)^\intercal x(t+1) - z(t)^\intercal x(t) = z(t+1)^\intercal \xi(t) + u(t+1)^\intercal H x(t). 
\end{equation*}
Summing this relationship from $t=0$ to $t=T-1$ yields,
\begin{gather*}
    z(T)^\intercal x(T) = z(0)^\intercal x(0) +  \sum_{t=0}^{T-1} z(t+1)^\intercal \xi(t) + u(t+1)^\intercal H x(t). 
\end{gather*}
Upon subtracting the estimate $a^\intercal \hat{x}_{\mathcal L}(T)$, and using the adjoint relationship~\eqref{eq:adjoint} and $z(T)=a$, it lead to  
\begin{align*}
    a^\intercal x(T) -& a^\intercal\hat{x}_{\mathcal L}(T)  = z(0)^\intercal x(0) - b^\intercal m_0 \\ &+ \textstyle\sum_{t=0}^{T-1} z(t+1)^\intercal \xi(t) - u(t+1)^\intercal w(t).
\end{align*}
Squaring both sides and taking the expectation concludes the first duality result.

The second claim follows from the identity
\begin{align*}
 &\E{}{\|y(T)\!-\!\hat{y}_{\mathcal L}(T)\|^2}  \!=  \!\E{}{\|Hx(T)\!-\!H\hat{x}_{\mathcal L}(T)\|^2 \!+\!\|w(T)\|^2}\\
 &\quad =\textstyle\sum_{i=1}^m \E{}{|H_i^\intercal x(T) - H_i^\intercal \hat{x}_{\mathcal L}(T)|^2} + \tr{R}, 
\end{align*}
and the application of the first result with $a=H_i$. 
\end{proof}

\subsection{Proof of \Cref{prop:coercive}}
\begin{proof}
Consider any $L \in \mathcal{S}$ and note that the right eigenvectors of $A$ and $A_L$ that are annihilated by $H$ are identical. Thus, by \ac{PBH} test, observability of $(A,H)$ is equivalent to observability of $(A_L,H)$. Therefore, there exists a positive integer $n_0 \leq n$ such that 
\begin{equation*}
    H_{n_0}^\intercal(L):=\begin{bmatrix}
    H^\intercal & A_L^\intercal H^\intercal & \ldots & (A_L^\intercal)^{n_0-1}H^\intercal
    \end{bmatrix}
\end{equation*}
is full-rank, implying that $H_{n_0}^\intercal(L) H_{n_0}(L)$ is positive definite.
Let $Q \succcurlyeq \underline{Q}\succ 0$, $R \succcurlyeq \underline{R} \succ 0$ uniformly in time for some matrices $\underline{Q}$ and $\underline{R}$.
Now, recall that for any such stabilizing gain $L$, we compute
\begin{gather*}
\begin{aligned}
    J(L) &= \tr{X_{\{-\infty:T\}}(L) H^\intercal H} \\
    &=  \tr{\sum_{t=0}^\infty (A_L)^t(Q_t+L R_t L^\intercal)(A_L^\intercal)^t H^\intercal H} \\
    &\geq \tr{\sum_{t=0}^\infty (A_L)^t(\underline{Q}+L \underline{R} L^\intercal)(A_L^\intercal)^t H^\intercal H} \\
    &= \tr{\sum_{t=0}^\infty  \sum_{k=0}^{n_0-1}(A_L)^{n_0t+k}(\underline{Q}+L \underline{R} L^\intercal)(A_L^\intercal)^{n_0t+k} H^\intercal H} \\
    & =  \tr{\sum_{t=0}^\infty  (A_L)^{n_0t}(\underline{Q}+L \underline{R} L^\intercal)(A_L^\intercal)^{n_0t} H_{n_0}^\intercal(L) H_{n_0}(L)}\\
    & \eqqcolon \tr{X_{n_0}(L) H_{n_0}^\intercal(L) H_{n_0}(L)},
\end{aligned}
\end{gather*}
where we used the cyclic property of trace and the inequality follows because for any PSD matrices $P_1 , P_2 \succcurlyeq 0$ we have
\begin{equation}\label{eq:postrace}
    \tr{P_1P_2} = \tr{P_2^{\frac{1}{2}} P_1 P_2^{\frac{1}{2}}} \geq 0.
\end{equation}
Also, $X_{n_0}(L)$ is well defined because $A_L$ is Schur stable if and only if $(A_L)^{n_0}$ is. Moreover, $X_{n_0}(L)$ coincides with the unique solution to the following Lyapunov equation
\[X_{n_0}(L) = (A_L)^{n_0} X_{n_0}(L) (A_L^\intercal)^{n_0} + \underline{Q} + L \underline{R} L^\intercal.\]
Next, as $\underline{Q}, \underline{R} \succcurlyeq 0$,
\begin{align}
    J(L) \geq& \lambdamin(H_{n_0}^\intercal(L) H_{n_0}(L)) \tr{X_{n_0}(L)} \nonumber\\
    \geq& \lambdamin(H_{n_0}^\intercal(L) H_{n_0}(L)) \tr{\sum_{t=0}^\infty  (A_L)^{n_0t} \underline{Q} (A_L^\intercal)^{n_0t}} \nonumber\\
    \geq& \lambdamin(H_{n_0}^\intercal(L) H_{n_0}(L)) \lambdamin(\underline{Q}) \sum_{t=0}^\infty  \tr{(A_L^\intercal)^{n_0t}(A_L)^{n_0t}} \nonumber\\
    \geq& \lambdamin(H_{n_0}^\intercal(L) H_{n_0}(L)) \lambdamin(\underline{Q}) \sum_{t=0}^\infty  \rho(A_L)^{2n_0t} , \label{eqn:Jlower}
\end{align}
where the last inequality follows by the fact that
\begin{multline*}
    \tr{(A_L^\intercal)^{n_0t}(A_L)^{n_0t}} = \|(A_L)^{n_0t}\|_F^2 \\
    \geq \|(A_L)^{n_0t}\|_{\text{op}}^2 \geq \rho((A_L)^{n_0t})^2 = \rho(A_L)^{2n_0t},
\end{multline*}
where $\|\cdot\|_F$ denotes the Frobenius norm and with $\|\cdot\|_{\text{op}}$ denoting the operator norm induced by 2-norm.
Now, by \Cref{lem:topology} and continuity of the spectral radius, as $L_k \to \partial \mathcal{S}$ we observe that $\rho(A_{L_k}) \to 1$. But then, the obtained lowerbound implies that $J(L_k) \to \infty$. On the other hand, as $\underline{Q}\succ 0$, $\underline{R} \succ 0$ are both time-independent, by using a similar technique we also provide the following lowerbound
\begin{align*}
    J(L) \geq& \tr{(\underline{Q}+L \underline{R} L^\intercal)\sum_{t=0}^\infty(A_L^\intercal)^{n_0t}  H_{n_0}^\intercal(L) H_{n_0}(L) (A_L)^{n_0t}}\\
    \geq& \lambdamin(H_{n_0}^\intercal(L) H_{n_0}(L)) \tr{\underline{Q} + L \underline{R} L^\intercal} \\
    \geq& \lambdamin(H_{n_0}^\intercal(L) H_{n_0}(L)) \tr{\underline{R} LL^\intercal}\\
    \geq& \lambdamin(H_{n_0}^\intercal(L) H_{n_0}(L)) \lambdamin(\underline{R}) \|L\|_F^2.
\end{align*}
Therefore, by equivalency of norms on finite dimensional spaces, $\|L_k\| \to \infty$ implies that $J(L_k) \to \infty$ which concludes that $J(.)$ is coercive on $\mathcal{S}$. 
Finally, note that for any $L \not\in \mathcal{S}$, by \cref{eqn:Jlower} we can argue that $J(L) = \infty$, therefore the sublevel sets $\mathcal{S}_\alpha \subset \mathcal{S}$ whenever $\alpha$ is finite. The compactness of $\mathcal{S}_\alpha$ is then a direct consequence of the coercive property and continuity of $J(.)$ (\Cref{lem:topology}).
\end{proof}

\subsection{Derivation of the gradient formula}
Next, we aim to compute the gradient of $J$ for the time-varying parameters.
For any admissible $\Delta$, we have
\begin{align*}
    &X_{\{-\infty:T\}}(L+\Delta)- X_{\{-\infty:T\}}(L) =\\
    &\sum_{t=1}^{\infty} \left(A_L\right)^{t} \left(  Q_t + L R_t  L^\intercal \right) \left(\star\right)^\intercal + \left(\star\right) \left(  Q_t + L R_t  L^\intercal \right) \left(A_L^\intercal\right)^{t} \\
    &\quad-\sum_{t=0}^{\infty} \left(A_L\right)^{t} \left(  \Delta R_t  L^\intercal + L R_t \Delta^\intercal \right)\left(A_L^\intercal\right)^{t} + o(\|\Delta\|),
\end{align*}
where the $\star$ is hiding the following term
\[\sum_{k=1}^t (A_L)^{t-k} \Delta H (A_L)^{k-1}.\]
Therefore, by linearity and cyclic permutation property of trace, we get that
\begin{gather*}
    \begin{aligned}
        &J(L+\Delta)- J(L) =\\
    &\tr{\Delta H\sum_{t=1}^{\infty}\sum_{k=1}^{t} 2\left(A_L\right)^{k-1}  \left(  Q_t + L R_t  L^\intercal \right)  \left(A_L^\intercal\right)^{t} H^\intercal H \left(A_L\right)^{t-k}}\\
    &\quad -\tr{\Delta \sum_{t=0}^{\infty} 2R_t  L^\intercal \left(A_L^\intercal\right)^{t} H^\intercal H \left(A_L\right)^{t}} + o(\|\Delta\|).
    \end{aligned}
\end{gather*}
Finally, by considering the Euclidean metric on real matrices induced by the inner product $\langle Q, P\rangle = \tr{Q^\intercal P}$, we obtain the gradient of $J$ as follows 
\begin{gather*}
\begin{aligned}
    \nabla & J(L) = -2\sum_{t=0}^{\infty} \left(A_L^\intercal\right)^{t} H^\intercal H  \left(A_L\right)^{t} L R_t \\
    +&2\sum_{t=1}^{\infty}\sum_{k=1}^{t}     \left(A_L^\intercal\right)^{t-k}  H^\intercal H\left(A_L\right)^{t} \left(  Q_t + L R_t  L^\intercal \right)  \left(A_L^\intercal\right)^{k-1} H^\intercal,
\end{aligned}
\end{gather*}
whenever the series are convergent!
And, by switching the order of the sums it simplifies to
\begin{gather*}
    \begin{aligned}
            \nabla & J(L) = -2\sum_{t=0}^{\infty} \left(A_L^\intercal\right)^{t} H^\intercal H  \left(A_L\right)^{t} L R_t\\ &+2\sum_{k=1}^{\infty}\sum_{t=k}^{\infty} \left[\left(A_L^\intercal\right)^{t-k}  H^\intercal H \left(A_L\right)^{t-k}\right] \\
            &\quad\cdot A_L \left[\left(A_L\right)^{k-1}\left(  Q_t + L R_t  L^\intercal \right)  \left(A_L^\intercal\right)^{k-1}\right] H^\intercal.\\
            =& -2\sum_{t=0}^{\infty} \left(A_L^\intercal\right)^{t} H^\intercal H  \left(A_L\right)^{t} L R_t +2\sum_{t=0}^{\infty} \left[\left(A_L^\intercal\right)^{t}  H^\intercal H \left(A_L\right)^{t}\right] \\
            &\quad\cdot A_L \left[\sum_{k=0}^{\infty}\left(A_L\right)^{k}\left(  Q_{t+k+1} + L R_{t+k+1}  L^\intercal \right)  \left(A_L^\intercal\right)^{k}\right] H^\intercal.
    \end{aligned}
\end{gather*} For the case of time-independent $Q$ and $R$, this reduces to
\begin{gather*}
\begin{aligned}
    \nabla  J(L) 
    =& -2Y_{(L)} L R \\
    & + 2Y_{(L)} A_L\left[\sum_{k=0}^{\infty} \left(A_L\right)^{k}\left(  Q + L R  L^\intercal \right)  \left(A_L^\intercal\right)^{k}\right] H^\intercal\\
    =& 2Y_{(L)} \left[-LR + A_L X_{(L)} H^{\intercal} \right].
\end{aligned}
\end{gather*}
where $Y_{(L)}=Y$ is the unique solution of
\[Y = A_L^\intercal Y A_L + H^\intercal H.\]

\subsection{Proof of the \Cref{prop:graddom}}
\begin{proof}
Note that $X=X_{(L)}$ satisfies 
\begin{equation}\label{eqn:X}
    X = A_{L} X A_{L}^\intercal + Q + L R L^\intercal.
\end{equation}
Then, by combining \cref{eqn:Xstar} and \cref{eqn:X}, and some algebraic manipulation, we recover part of the gradient information, i.e. $(-LR + A_L X H^\intercal)$, in the gap of cost matrices by arriving at the following identity
\begin{gather}\label{eq:X-X}
\begin{aligned}
    X&-X^* - A_{L^*} (X-X^*) A_{L^*}^\intercal 
    \\
    =&(LR - A_L X H^\intercal)(L-L^*)^\intercal 
    + (L-L^*)(R L^\intercal - H X A_L^\intercal) \\
    &- (L -L^*)R(L-L^*)^\intercal 
    - (L - L^*) H X H^\intercal (L - L^*)^\intercal \\
    \preccurlyeq& \frac{1}{a}(LR - A_L X H^\intercal) (R L^\intercal - H X A_L^\intercal)
    + a(L-L^*)(L-L^*)^\intercal\\
    &- (L - L^*) (R + H X H^\intercal) (L - L^*)^\intercal\\
\end{aligned}
\end{gather}
where the upperbound is valid for any choice of $a > 0$. Now, as $R\succ 0$, we choose $a = \lambdamin(R)/2$. As $X \succcurlyeq 0$, it further upperbounds
\begin{gather*}
\begin{aligned}
    X-X^* - A_{L^*} &(X-X^*) A_{L^*}^\intercal 
    \\
    \preccurlyeq& \frac{2}{\lambdamin(R)}(-LR + A_L X H^\intercal) (-R L^\intercal + H X A_L^\intercal) \\
    &- \frac{\lambdamin(R)}{2}(L - L^*) (L - L^*)^\intercal .
\end{aligned}
\end{gather*}
Now, let $\Tilde{X}$ and $\widehat{X}$ be, respectively, the unique solution of the following Lyapunov equations
\begin{align*}
    \Tilde{X} &= A_{L^*} \Tilde{X} A_{L^*}^\intercal + (-LR + A_L X H^\intercal) (-R L^\intercal + H X A_L^\intercal),\\
    \widehat{X} &= A_{L^*} \widehat{X} A_{L^*}^\intercal + (L - L^*) (L - L^*)^\intercal.
\end{align*}
Then by comparison, we conclude that 
\[X - X^* \preccurlyeq \frac{2}{\lambdamin(R)}\Tilde{X} - \frac{\lambdamin(R)}{2}\widehat{X}.\]
Recall that by the fact in \cref{eq:postrace},
\begin{multline}\label{eqn:ineqJ}
    J(L) - J(L^*) = \tr{(X-X^*)H^\intercal H} \\
    \leq \frac{2}{\lambdamin(R)}\tr{\Tilde{X}H^\intercal H} - \frac{\lambdamin(R)}{2}\tr{\widehat{X} H^\intercal H}.
\end{multline}
Let $Y^* \succ 0$ be the unique solution of
\[Y^* = A_{L^*}^\intercal Y^* A_{L^*} + H^\intercal H,\]
then, by cyclic permutation property
\begin{gather}
    \tr{\Tilde{X}H^\intercal H} = \tr{(-LR + A_L X H^\intercal) (-R L^\intercal + H X A_L^\intercal)Y^*} \nonumber\\
    \leq \frac{\lambdamax(Y^*)}{\lambdamin^2(Y_{(L)})} \tr{Y_{(L)}(-LR + A_L X H^\intercal) (-R L^\intercal + H X A_L^\intercal)Y_{(L)}} \nonumber\\
    = \frac{\lambdamax(Y^*)}{4\lambdamin^2(Y_{(L)})} \langle \nabla J(L), \nabla J(L) \rangle \label{eqn:ineqgrad}
\end{gather}
where the inequality follows by \cref{eq:postrace} and the last equality follows by the formula for the gradient $\nabla J(L)$.
Similarly, we obtain that
\begin{multline}\label{eq:Xhatbound}
    \tr{\widehat{X}H^\intercal H} = \tr{(L - L^*) (L - L^*)^\intercal Y^*} \\
    \geq \lambdamin(Y^*)\|L-L^*\|_F^2.
\end{multline}
Notice that the mapping $L \to Y_{(L)}$ is continuous on $\mathcal{S} \supset \mathcal{S}_\alpha$, and also by observability of $(A,H)$, $Y_{(L)} \succ 0$ for any $L \in \mathcal{S}$. To see this, let $H_{n_0}(L) \succ 0$ be as defined in \Cref{prop:coercive}. Then,
\begin{gather*}
\begin{aligned}
    Y_{(L)} &= \sum_{t=0}^\infty (A_L^\intercal)^t (H^\intercal H) (A_L)^t \\
    &= \sum_{t=0}^\infty  \sum_{k=0}^{n_0-1}(A_L^\intercal)^{n_0t+k} (H^\intercal H) (A_L)^{n_0t+k} \\
    &= \sum_{t=0}^\infty (A_L^\intercal)^{n_0t} H_{n_0}^\intercal(L) H_{n_0}(L) (A_L)^{n_0t}\\
    &\succcurlyeq H_{n_0}^\intercal(L) H_{n_0}(L) \succ 0.
\end{aligned}
\end{gather*}
Now, by \Cref{prop:coercive},  $\mathcal{S}_\alpha$ is compact and therefore we claim that the following infimum is attained with some positive value $\kappa_\alpha$:
\begin{equation}\label{eqn:kappa}
    \inf_{L \in \mathcal{S}_\alpha} \lambdamin(Y_{(L)}) \eqqcolon \kappa_\alpha >0.
\end{equation}
Finally, the first claimed inequality follows by combining the inequalities \cref{eqn:ineqJ}, \cref{eqn:ineqgrad} and \cref{eq:Xhatbound}, with the following choice of parameters
\[c_1(\alpha) \coloneqq \frac{2\lambdamin(R)}{\lambdamax(Y^*)}\kappa_\alpha^2, \quad\text{and}\quad c_2(\alpha) \coloneqq \frac{\lambdamin(Y^*)\lambdamin(R)^2 }{\lambdamax(Y^*)}\kappa_\alpha^2.\]
For the second claimed inequality, one arrives at the following identity by a computation similar to \cref{eq:X-X}:
\begin{gather*}
\begin{aligned}
    X&-X^* - A_{L} (X-X^*) A_{L}^\intercal 
    \\
    =&(L^*R - A_{L^*} X^* H^\intercal)(L-L^*)^\intercal 
    + (L-L^*)(R {L^*}^\intercal - H X^* A_{L^*}^\intercal) \\
    &+ (L -L^*)R(L-L^*)^\intercal 
    + (L - L^*) H X^* H^\intercal (L - L^*)^\intercal \\
    =& (L - L^*) (R + H X^* H^\intercal) (L - L^*)^\intercal
\end{aligned}
\end{gather*}
where the second equality follows because $Y_{(L)}\succ 0$ and thus
\[L^*R - A_{L^*} X^* H^\intercal = -Y_{(L)}^{-1}\nabla J(L^*) =0.\]
Recall that
\[J(L) - J(L^*) = \tr{(X-X^*)H^\intercal H},\]
then by the equality in \cref{eq:X-X} and cyclic property of trace we obtain
\[J(L) - J(L^*) = \tr{Z Y_{(L)}},\]
where
\begin{gather*}
\begin{aligned}
    Z \coloneqq& (L - L^*) (R + H X^* H^\intercal) (L - L^*)^\intercal\\
    \succcurlyeq& \lambdamin(R) (L - L^*) (L - L^*)^\intercal.
\end{aligned}
\end{gather*}
Therefore, for any $L \in \mathcal{S}_\alpha$, we have
\[J(L) - J(L^*) \geq \lambdamin(Y_{(L)})\tr{Z} \geq \lambdamin(R) \kappa_\alpha \|L - L^*\|_F^2,\]
and thus, we complete the proof by the following choice of parameter
\[c_3(\alpha) \coloneqq \lambdamin(R) \kappa_\alpha.\]
\end{proof}

\subsection{\acf{gf}}
In this section, we consider a policy update according to the  the \ac{gf} dynamics:
\begin{flalign*} 
\text{[\ac{gf}]} \qquad\qquad\qquad\quad \dot{L}_s = - \nabla J(L_s). &&
\end{flalign*}
We summarize the convergence result in the following Proposition which is a direct consequence of \Cref{prop:graddom} and we provide the proof for completeness.

\begin{proposition}\label{prop:gradflow}
Consider any sublevel set $\mathcal{S}_\alpha$ for some $\alpha >0$. Then, for any initial policy $L_0 \in \mathcal{S}_\alpha$, the \ac{gf} updates converges to optimality at a linear rate of $\exp(-c_1(\alpha))$ (in both the function value and the policy iterate). In particular, we have
\[J(L_s) -J(L^*) \leq (\alpha - J(L^*)) \exp(-c_1(\alpha)s),\]
and 
\[\|L_s - L^*\|_F^2  \leq \frac{\alpha - J(L^*)}{c_3(\alpha)} \exp(-c_1(\alpha) s).\]
\end{proposition}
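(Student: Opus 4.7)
The plan is to exploit the gradient dominance property from \Cref{prop:graddom} combined with the standard Lyapunov-style analysis of gradient flow in which $J$ itself serves as the Lyapunov function. First, I would observe that along any solution of the \ac{gf} dynamics $\dot L_s = -\nabla J(L_s)$, differentiating $J$ along the trajectory yields
\[
    \frac{d}{ds} J(L_s) = \langle \nabla J(L_s), \dot L_s \rangle = -\langle \nabla J(L_s),\nabla J(L_s)\rangle \leq 0,
\]
so the cost is non-increasing. Consequently, for any initial $L_0 \in \mathcal{S}_\alpha$, the entire trajectory $\{L_s\}_{s \geq 0}$ remains inside the sublevel set $\mathcal{S}_\alpha$. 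This is the crucial invariance that allows the constants $c_1(\alpha), c_3(\alpha)$ from \Cref{prop:graddom} to remain valid along the flow.

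Next, I would invoke the gradient dominance inequality from \Cref{prop:graddom}. Dropping the non-negative $c_2\|L-L^*\|_F^2$ term gives $c_1(\alpha)[J(L_s) - J(L^*)] \leq \|\nabla J(L_s)\|^2$, which combined with the Lyapunov identity above produces the scalar differential inequality
\[
    \frac{d}{ds}\bigl[J(L_s) - J(L^*)\bigr] \leq -c_1(\alpha)\bigl[J(L_s) - J(L^*)\bigr].
\]
A direct application of Grönwall's inequality then yields $J(L_s) - J(L^*) \leq [J(L_0) - J(L^*)]\exp(-c_1(\alpha) s) \leq [\alpha - J(L^*)]\exp(-c_1(\alpha) s)$, which is the first claimed bound. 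The iterate bound follows immediately by feeding this into the second inequality of \Cref{prop:graddom}, namely $c_3(\alpha)\|L_s - L^*\|_F^2 \leq J(L_s) - J(L^*)$.

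One should also justify that the \ac{gf} solution exists for all $s \geq 0$. Since $J$ is real analytic on $\mathcal{S}$ (\Cref{lem:topology}), $\nabla J$ is locally Lipschitz, so a unique solution exists locally. Because the trajectory stays within the compact set $\mathcal{S}_\alpha \subset \mathcal{S}$ (by the monotonicity argument above together with the compactness from \Cref{prop:coercive}), there is no finite-time blow-up and no escape to the boundary $\partial \mathcal{S}$, so the solution extends globally in $s$.

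The main conceptual obstacle is confirming that the local-in-$\mathcal{S}_\alpha$ constant $c_1(\alpha)$ is the appropriate rate to use; this is resolved precisely by the monotonicity observation, which keeps the iterate trapped in the initial sublevel set throughout the flow. The rest of the argument is a routine Grönwall computation and does not require revisiting the coercivity or gradient formula derivations.
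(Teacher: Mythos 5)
Your proof is correct and follows essentially the same route as the paper: use $V(L)=J(L)-J(L^*)$ as a Lyapunov function, note monotone decay keeps $L_s$ in $\mathcal{S}_\alpha$, apply the gradient dominance inequality from \Cref{prop:graddom} to get the differential inequality, integrate (Gr\"onwall), and convert to the iterate bound via the second inequality of \Cref{prop:graddom}. Your added remark on global existence of the flow is a sensible extra detail the paper leaves implicit.
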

\begin{proof}
Consider a Lyapunov candidate function $V(L) \coloneqq J(L) -J(L^*)$. Under the \ac{gf} dynamics
\[\dot{V}(L_s) = - \langle \nabla J(L_s), \nabla J(L_s) \rangle \leq 0.\]
Therefore, $L_s \in \mathcal{S}_\alpha$ for all $s>0$. But then, by \Cref{prop:graddom}, we can also show that
\[\dot{V}(L_s) \leq - c_1(\alpha) V(L_s) - c_2(\alpha) \|L_s -L^*\|_F^2, \quad \text{ for } s>0.\]
By recalling that $c_1(\alpha)> 0$ is a positive constant independent of $L$, we conclude the following exponential stability of the \ac{gf}:
\[V(L_s) \leq V(L_0)\exp(-c_1(\alpha) s) ,\]
for any $L_0 \in \mathcal{S}_\alpha$ which, in turn, guarantees convergence of $J(L_s) \to J(L^*)$ at the linear rate of $\exp(-c_1(\alpha))$.
Finally, the linear convergence of the policy iterates follows directly from the second bound in \Cref{prop:graddom}: 
\[\textstyle \|L_s - L^*\|_F^2 \leq \frac{1}{c_3(\alpha)} V(L_s) \leq \frac{V(L_0)}{c_3(\alpha)} \exp(-c_1(\alpha) s).\]
The proof concludes by noting that $V(L_0) \leq \alpha - J(L^*)$ for any such initial value $L_0 \in \mathcal{S}_\alpha$.
\end{proof}

\subsection{Proof of \Cref{lem:lipschitz}}
\begin{proof} 
Notice that the mappings $L \to X_{(L)}$, $L \to Y_{(L)}$ and $L \to A_{L}$ are all real-analytic on the open set $\mathcal{S} \supset \mathcal{S}_\alpha$, and thus so is the mapping $L \to \nabla J(L) = 2Y_{(L)} \left[-LR + A_L X_{(L)} H^{\intercal} \right]$. Also, by \Cref{prop:coercive},  $\mathcal{S}_\alpha$ is compact and therefore the mapping $L \to \nabla J(L)$ is $\ell$-Lipschitz continuous on $\mathcal{S}_\alpha$ for some $\ell =\ell(\alpha)>0$. In the rest of the proof, we attempt to characterize $\ell(\alpha)$ in terms of the problem parameters.
By direct computation we obtain
\begin{gather*}
\begin{aligned}
    &\nabla J(L_1) - \nabla J(L_2) =(2Y_{(L_1)} - 2Y_{(L_2)}) \left[-L_1R + A_{L_1} X_{(L_1)} H^{\intercal} \right] \\
    &+ 2Y_{(L_2)} \left(\left[-L_1R + A_{L_1} X_{(L_1)} H^{\intercal} \right] -\left[-L_2R + A_{L_2} X_{(L_2)} H^{\intercal} \right]\right)\\
    &= 2(Y_{(L_1)} - Y_{(L_2)}) \left[-L_1(R + H X_{(L_1)} H^{\intercal}) + A X_{(L_1)} H^{\intercal} \right]\\
    &+ 2Y_{(L_2)} \big[(L_2 - L_1)(R + H X_{(L_1)} H^{\intercal}) + A_{L_2} ( X_{(L_1)}  - X_{(L_2)}) H^{\intercal} \big].
\end{aligned}
\end{gather*}
Therefore,
\begin{multline}\label{eqn:grad-lip1}
    \|\nabla J(L_1) - \nabla J(L_2)\|_F^2 \leq
    \ell_1^2\|Y_{(L_1)} - Y_{(L_2)}\|_F^2 \\+ \ell_2^2\|L_1 - L_2\|_F^2 + \ell_3^2\|X_{(L_1)}  - X_{(L_2)}\|_F^2
\end{multline}
where 
\begin{align*}
    \ell_1 &= \ell_1(L_1) \coloneqq 2\|-L_1(R + H X_{(L_1)} H^{\intercal}) + A X_{(L_1)} H^{\intercal}\|_{\text{op}},\\
    \ell_2 &= \ell_2(L_1,L_2) \coloneqq 2 \|Y_{(L_2)}\|_{\text{op}}\, \|R + H X_{(L_1)} H^{\intercal}\|_{\text{op}},\\
    \ell_3 &= \ell_3(L_2) \coloneqq 2 \|Y_{(L_2)}\|_{\text{op}}\, \|A_{L_2}\|_{\text{op}}\, \|H^{\intercal}\|_{\text{op}}.
\end{align*}
On the other hand, by direct computation we obtain
\begin{gather}\label{eqn:gapY}
\begin{aligned}
    Y_{(L_1)}& - Y_{(L_2)} - A_{L_1}^\intercal (Y_{(L_1)} - Y_{(L_2)}) A_{L_1} \\
     =&(L_2-L_1)^\intercal H^\intercal  Y_{(L_2)} A_{L_2} + A_{L_2}^\intercal Y_{(L_2)} H (L_2-L_1) \\
    &+ (L_1-L_2)^\intercal H^\intercal  Y_{(L_2)} H (L_1-L_2) \\
    \preccurlyeq&  \|L_1-L_2\|_F \, \ell_4 \, I
\end{aligned}
\end{gather}
where
\begin{multline*}
    \ell_4 = \ell_4(L_1,L_2) \coloneqq  2\|
     H^\intercal  Y_{(L_2)} A_{L_2}\|_{\text{op}} \\
     + \|H^\intercal  Y_{(L_2)} H (L_1-L_2)\|_{\text{op}}.
\end{multline*}
Now, consider the mapping $L \to Z_{(L)}$ where $Z_{(L)} = Z$ is the unique solution of the following Lyapunov equation:
\[Z = A_L^\intercal Z A_L + I,\]
which is well-defined and continuous on $\mathcal{S}\supset \mathcal{S}_\alpha$. Therefore, by comparison, we claim that
\[\|Y_{(L_1)} - Y_{(L_2)}\|_F \preccurlyeq \|L_1-L_2\|_F \; \ell_4 \; \|Z_{(L_1)}\|_F.\]
By a similar computation to that of \cref{eq:X-X}, we obtain that
\begin{gather}\label{eqn:gapX}
\begin{aligned}
    X_{(L_1)}-X_{(L_2)} - &A_{L_2} (X_{(L_1)}-X_{(L_2)}) A_{L_2}^\intercal 
    \\
    =&(L_1R - A_{L_1} X_{(L_1)} H^\intercal)(L_1-L_2)^\intercal \\
    &+ (L_1-L_2)(R L_1^\intercal - H X_{(L_1)} A_{L_1}^\intercal) \\
    &- (L_1 -L_2)R(L_1-L_2)^\intercal \\
    &- (L_1 - L_2) H X_{(L_1)} H^\intercal (L_1 - L_2)^\intercal \\
    \preccurlyeq& \|L_1-L_2\|_F \, \ell_5 \, (Q + L_2 R L_2^\intercal)
\end{aligned}
\end{gather}
where
\begin{equation*}
    \ell_5 = \ell_5(L_1) \coloneqq  2\|-L_1R + A_{L_1} X_{(L_1)} H^\intercal\|_{\text{op}}/ \lambdamin(Q).
\end{equation*}
Therefore, by comparison, we claim that
\[\|X_{(L_1)} - X_{(L_2)}\|_F \preccurlyeq \|L_1-L_2\|_F \; \ell_5 \; \|X_{(L_2)}\|_F.\]
%
Finally, by compactness of $\mathcal{S}_\alpha$, we claim that the following supremums are attained and thus, are achieved with some \emph{finite} positive values:
\begin{align*}
    & \bar\ell_1(\alpha) \coloneqq \sup_{L_1,L_2\in\mathcal{S}_\alpha} \ell_1(L_1) \ell_4(L_1,L_2) \; \|Z_{(L_1)}\|_F,\\
    & \bar\ell_2(\alpha) \coloneqq \sup_{L_1,L_2\in\mathcal{S}_\alpha} \ell_2(L_1,L_2),\\
    & \bar\ell_3(\alpha) \coloneqq \sup_{L_1,L_2\in\mathcal{S}_\alpha} \ell_3(L_2) \ell_5(L_1) \|X_{(L_2)}\|_F.
\end{align*}
Then, the claim follows by combining the bound in \cref{eqn:grad-lip1} with \cref{eqn:gapY} and \cref{eqn:gapX},  and the following choice of 
\begin{equation*}
    \ell(\alpha) \coloneqq \sqrt{\bar{\ell}_1^2(\alpha) + \bar{\ell}_2^2(\alpha) + \bar{\ell}_3^2(\alpha)}.
\end{equation*}
\end{proof}

\subsection{Proof of \Cref{thm:graddescent }}
\begin{proof}
First, we argue that the \ac{gd} update with such a step size does not leave the initial sublevel set $\mathcal{S}_\alpha$ for any initial $L_0 \in \mathcal{S}_\alpha$.
In this direction, consider $L(\eta) = L_0 - \eta \nabla J(L_0)$ for $\eta \geq 0$ where $L_0 \neq L^*$. 
Then, by compactness of $\mathcal{S}_\alpha$ and continuity of the mapping $\eta \to J(L(\eta))$ on $\mathcal{S} \supset \mathcal{S}_\alpha$, the following supremum is attained with a positive value $\bar\eta_0$:
\[\bar\eta_0 \coloneqq \sup \{\eta: J(L(\zeta)) \leq \alpha, \forall \zeta \in [0,\eta]\}, \]
where positivity of $\bar\eta_0$ is a direct consequence of the strict decay of $J(L(\eta))$ for sufficiently small $\eta$ as $\nabla J(L_0) \neq 0$.
This implies that $L(\eta) \in \mathcal{S}_\alpha \subset \mathcal{S}$ for all $\eta \in [0,\bar\eta_0]$ and $J(L(\bar\eta_0)) = \alpha$. Next, by the Fundamental Theorem of Calculus and smoothness of $J(\cdot)$ (\Cref{lem:topology}), for any $\eta \in [0,\bar\eta_0]$ we have that,
\begin{align*}
    J(L(\eta)) &-J(L_0) - \langle \nabla J(L_0), L(\eta) - L_0\rangle\\
    &= \int_0^1 \langle \nabla J(L(\eta s)) -\nabla J(L_0), L(\eta) - L_0\rangle d s\\
    &\leq \|L(\eta) - L_0\|_F \int_0^1 \|\nabla J(L(\eta s)) -\nabla J(L_0)\|_F  d s\\
    &\leq \ell(\alpha) \|L(\eta) - L_0\|_F \int_0^1 \|L(\eta s) -L_0\|_F  d s\\
    &= \frac{1}{2}\ell(\alpha)\eta \|L(\eta) - L_0\|_F \|\nabla J(L_0)\|_F,
\end{align*}
where $\|\cdot\|_F$ denotes the Frobenius norm, the first inequality is a consequence of Cauchy-Schwartz, and the second one is due to \Cref{lem:lipschitz} and the fact that $L(\eta s)$ remains in $\mathcal{S}_\alpha$ for all $s \in [0,1]$.\footnote{Note that a direct application of Descent Lemma \cite[Lemma 5.7]{beck2017frist} may not be justified as one has to argue about the uniform bound for the Hessian of $J$ over the non-convex set $\mathcal{S}_\alpha$ where $J$ is $\ell(\alpha)$-Lipschitz only on $\mathcal{S}_\alpha$. Also see the proof of \cite[Theorem 2]{mohammadi2021convergence}.} By the definition of $L(\eta)$, it now follows that,
\begin{align}\label{eq:decayineq}
    J(L(\eta)) -J(L_0) \leq \eta \|\nabla J(L_0)\|_F^2 \left( \frac{\ell(\alpha) \eta}{2} - 1 \right).
\end{align}
This implies $J(L(\eta)) \leq J(L_0)$ for all $\eta \leq 2/\ell(\alpha)$, and thus concluding that $\bar\eta_0 \geq 2/\ell(\alpha)$. This justifies that $L(\eta) \in \mathcal{S}_\alpha$ for all $\eta \in [0, 2/\ell(\alpha)]$.
Next, if we consider the \ac{gd} update with any fixed stepsize $\eta \in (0,1/\ell(\alpha)]$ and apply the bound in \cref{eq:decayineq} and the gradient dominance property in \Cref{prop:graddom}, we obtain
\[\textstyle J(L_1) - J(L_0) \leq  \eta c_1 (\frac{\ell(\alpha) \eta}{2}-1)[J(L_0) - J(L^*)],\]
which by subtracting $J(L^*)$ results in
\[\textstyle J(L_1) - J(L^*) \leq  \left(1- \frac{\eta c_1}{2} \right)[J(L_0) - J(L^*)],\]
as $\eta c_1 ({\ell(\alpha) \eta}/{2}-1) \leq -{\eta c_1}/{2}$ for all $\eta \in (0,1/\ell(\alpha)]$.
By induction, and the fact that both $c_1(\alpha)$ and the choice of $\eta$ only depends on the value of $\alpha$, we conclude the convergence in the function value at a linear rate of $1- (\eta c_1/2)$ and the constant coefficient of $\alpha - J(L^*) \geq J(L_0) - J(L^*)$.
To complete the proof, the linear convergence of the policy iterates  follows directly from the second bound in \Cref{prop:graddom}.
\end{proof}

\subsection{Proof of \Cref{lem:grad-approx}}
\begin{proof}
For small enough $\Delta \in \mathbb R^{n\times m}$,
\begin{multline*}
   \varepsilon(L+\Delta,\mathcal Y) - \varepsilon(L,\mathcal Y) =  \|e_{T}(L + \Delta) \|^2 - \|e_{T}(L) \|^2 =\\
    2 \tr{(e_{T}(L + \Delta) - e_{T}(L) )e_{T}^\intercal(L)} + o(\|\Delta\|)).
\end{multline*}
The difference 
\begin{gather*}
    e_{T}(L+\Delta) - e_{T}(L) = E_1(\Delta) + E_2(\Delta) + o(\|\Delta\|),
\end{gather*}
with the following terms that are linear in $\Delta$:
\begin{align*}
    E_1(\Delta) &\coloneqq \textstyle -\sum_{t=0}^{\infty} H  (A_L)^{t} \Delta y(T-t-1),\\
    E_2(\Delta) &\coloneqq \textstyle \sum_{t=1}^{\infty}\sum_{k=1}^{t} H  (A_L)^{t-k} \Delta H (A_L)^{k-1} L y(T-t-1) .
\end{align*}
Therefore, combining the two identities, the definition of gradient under the inner product $\langle A,B\rangle :=\tr{AB^\intercal}$, and ignoring the higher order terms in $\Delta$ yields,
\begin{gather*}
    \langle \nabla_L \varepsilon(L,y), \Delta \rangle =  2 \tr{(E_1(\Delta)+ E_2(\Delta) )e_{T}^\intercal(L)},
\end{gather*}
which by linearity and cyclic permutation property of trace reduces to:
\begin{gather*}
    \langle \nabla_L \varepsilon(L,y), \Delta \rangle = - 2 \tr{\Delta \left(\sum_{t=0}^{\infty} y(T-t-1) e_{T}^\intercal(L) H (A_L)^{t} \right)}\\
    + 2 \tr{\Delta \left(\sum_{t=1}^{\infty}\sum_{k=1}^{t} H (A_L)^{k-1} L y(T-t-1) e_{T}^\intercal(L) H (A_L)^{t-k} \right)}.
\end{gather*}
This holds for all admissible $\Delta$, concluding the formula for the gradient.
\end{proof}

\end{document}